\documentclass[11pt]{article}

\usepackage[shortlabels]{enumitem}
\usepackage{amsmath} 
\usepackage{amssymb} 
\usepackage{mathrsfs} 
\usepackage{mathtools} 
\usepackage{amsthm}
\usepackage{bm}
\usepackage{thmtools}
\usepackage{thm-restate}
\usepackage{appendix}
\usepackage[in]{fullpage}
\usepackage{caption}
\usepackage{subcaption}
\usepackage{graphicx}
\usepackage[table,xcdraw]{xcolor}
\usepackage{wrapfig}
\usepackage{floatrow}
\usepackage{multirow}
\usepackage{makecell}
\usepackage{nicefrac}
\usepackage{dsfont}

\usepackage[ruled,vlined,linesnumbered]{algorithm2e}
\definecolor{forestgreen}{rgb}{0.13, 0.55, 0.13}

\SetCommentSty{mycommfont}
\SetKwInOut{Input}{input}
\SetKwInOut{Output}{output}
\DontPrintSemicolon

\definecolor{ForestGreen}{rgb}{0.0333,0.4451,0.0333}
\definecolor{DarkRed}{rgb}{0.65,0,0}
\definecolor{Red}{rgb}{1,0,0}
\usepackage[linktocpage=true,
pagebackref=true,colorlinks,
linkcolor=DarkRed,citecolor=ForestGreen,
bookmarks,bookmarksopen,bookmarksnumbered]{hyperref}

\usepackage{cleveref}
\usepackage{thm-restate}
\usepackage{tcolorbox}
\usepackage[normalem]{ulem} 

\newtheorem{theorem}{Theorem}[section]
\newtheorem{lemma}[theorem]{Lemma}

\newtheorem{remark}[theorem]{Remark}
\newtheorem{corollary}[theorem]{Corollary}
\newtheorem{proposition}[theorem]{Proposition}

\newtheorem{definition}[theorem]{Definition}

\newcommand*\mb[1]{\mathbb{#1}}

\newcommand*\mc[1]{\mathcal{#1}}
\DeclarePairedDelimiter\abs{\lvert}{\rvert}
\DeclarePairedDelimiter\norm{\lVert}{\rVert}%
\newcommand{\iprod}[1]{\left\langle{#1}\right\rangle}

\makeatletter
\let\oldabs\abs
\def\abs{\@ifstar{\oldabs}{\oldabs*}}
\let\oldnorm\norm
\def\norm{\@ifstar{\oldnorm}{\oldnorm*}}
\makeatother

\newcommand{\ts}{\textstyle}
\newcommand{\eps}{\varepsilon}
\let\epsilon\varepsilon

\newcommand{\R}{\mathbb{R}}
\newcommand{\ALG}{\mathsf{ALG}}
\newcommand{\OPT}{\mathsf{OPT}}

\DeclareMathOperator*{\argmax}{arg\,max} 
\DeclareMathOperator*{\argmin}{arg\,min}


\DeclareMathOperator{\spn}{span}
\newcommand{\dual}{\text{Dual}}

\DeclareMathOperator{\rank}{rank}

\newcommand{\alg}[1]{\widehat{#1}}

\newcommand{\spec}[1]{\widehat{#1}}

\newcommand{\polym}{\mc{P}_f}
\newcommand{\parm}{\mc{Q}}
\newcommand{\pa}{Q}

\newcommand{\Rp}{\mb{R}_{\geq 0}}
\renewcommand\vec[1]{\bm{#1}}
\newcommand{\up}[1]{^{(\vec{#1})}}
\newcommand{\upn}[1]{^{({#1})}}

\newcommand{\ealt}{{e_{\text{alt}}}}
\newcommand{\one}{\mathds{1}}

\newif\ifcomments
\commentstrue

\ifcomments
\usepackage[colorinlistoftodos,prependcaption,textsize=tiny]{todonotes}

\definecolor{dnotecol}{rgb}{0.20, 0.50, 0.80}
\newcommand{\dnote}[1]{{\textcolor{dnotecol}{\textbf{DH:} #1}}}
\newcommand{\dnotein}[1]{\todo[linecolor=red,backgroundcolor=dnotecol!25,bordercolor=red,inline]{\textbf{DH:~}#1}}


\definecolor{electricviolet}{rgb}{0.56, 0.0, 1.0}
\definecolor{mulberry}{rgb}{0.77, 0.29, 0.55}
\definecolor{olive}{rgb}{0.5, 0.5, 0.0}


\definecolor{kgreen}{rgb}{0.0, 0.7, 0.24}


\definecolor{imAboutIt}{rgb}{1, 0.5, .31}


\else 

\newcommand{\dnote}[1]{}
\newcommand{\dnotein}[1]{}
\newcommand{\dqtodo}[1]{}
\newcommand{\dqtodoin}[1]{}
\newcommand{\rnote}[1]{}
\newcommand{\rtodo}[1]{}

\fi 

\newcommand*\samethanks[1][\value{footnote}]{\footnotemark[#1]}

\title{The Online Submodular Assignment Problem}
\author{Daniel Hathcock\thanks{Carnegie Mellon University, D.~Hathcock supported by the NSF GRFP grant DGE-2140739.} \and  Billy Jin\thanks{Cornell University} \and Kalen Patton\thanks{Georgia Institute of Technology, Supported in part by NSF award CCF-2327010.} \and Sherry Sarkar\samethanks[1] \and Michael Zlatin\samethanks[1]}
\date{}

\begin{document}

\maketitle

\begin{abstract}

Online resource allocation is a rich and varied field. One of the most well-known problems in this area is  online bipartite matching, introduced in 1990 by Karp, Vazirani, and Vazirani \cite{KVV90}. Since then, many variants have been studied, including AdWords, the generalized assignment problem (GAP), and online submodular welfare maximization. 

In this paper, we introduce a  generalization of GAP which we call the submodular assignment problem (SAP). This generalization captures many online assignment problems, including all classical online bipartite matching problems as well as broader online combinatorial optimization problems such as online arboricity, flow scheduling, and laminar restricted allocations. We present a fractional algorithm for online SAP that is $(1-\frac{1}{e})$-competitive. 

Additionally, we study several integral special cases of the problem. In particular, we provide a $(1-\frac{1}{e}-\epsilon)$-competitive integral algorithm under a small-bids assumption, and a $(1-\frac{1}{e})$-competitive integral algorithm for online submodular welfare maximization where the utility functions are given by rank functions of matroids.

The key new ingredient for our results is the construction and structural analysis of a “water level” vector for polymatroids, which allows us to generalize the classic water-filling paradigm used in online matching problems. This construction reveals connections to submodular utility allocation markets and principal partition sequences of matroids.
\end{abstract}

\pagenumbering{gobble}
\newpage

\pagenumbering{arabic}
\setcounter{page}{1}

\section{Introduction}
\label{sec:introduction}

Online assignment problems are 
fundamental in the study of online algorithms. Perhaps the most well-known online assignment problem is online bipartite matching, introduced by Karp, Vazirani, and Vazirani \cite{KVV90}. In online bipartite matching, we are given one side of a bipartite graph (the \emph{offline} vertices) in advance, while the vertices on the other side arrive online. When an online vertex arrives, all of its incident edges are revealed, and the algorithm selects at most one of the edges. The goal is to maximize the number of edges chosen, subject to the edges being a matching in the graph. For this problem, Karp, Vazirani, and Vazirani proposed the Ranking algorithm which achieves a tight $1 - \nicefrac{1}{e}$ competitive ratio.

Since then, online bipartite matching has received considerable attention, and more general variations of the problem have been studied. Some of the most prominent examples include:
\begin{itemize}
    \item \textbf{Vertex and Edge Weighted Variants.} In vertex weighted online bipartite matching, each \emph{offline} vertex has a weight, and the goal is to maximize the sum of the weights of the matched offline vertices. In the more general edge-weighted setting, individual edges have weight and the goal is to maximize the sum of the weights of the selected edges. 
    \item \textbf{AdWords.} This was introduced by Mehta, Saberi, Vazirani, and Vazirani~\cite{MSVV07}, motivated by the AdWords market in digital advertising. Each offline vertex $i$ has a budget $B_i$ and each edge $e$ has a bid $b_e$. Selecting an edge consumes an amount of budget from the offline vertex equal to the bid of the edge. The goal is to maximize the total sum of the bids of the selected edges, subject to the budget constraints. Note that vertex-weighted bipartite matching is a special case of AdWords with $b_{ij} = B_i$ for all edges $ij$.
    \item \textbf{Generalized Assignment Problem (GAP).} Here, every offline vertex has a budget $B_i$, and every edge $e$ has both a value $v_e$ and a cost $b_e$. The goal is to maximize the total value of the selected edges, such that the total cost of the edges incident to any offline vertex does not exceed its budget. This is one of the broadest online matching problems that has been studied in the literature, and generalizes all of the settings above\footnote{We consider GAP in the setting of \cite{FKMMP09}, which includes the small-bids and free disposal assumptions. In other literature that considers GAP as an offline problem, these assumptions are not usually made. It is only with these assumptions that GAP can be considered a generalization of AdWords.}. In particular, AdWords is the special case of GAP with $b_e = v_e$ for all edges $e$. Edge-weighted bipartite matching is the case with all $B_i$ and $b_e$ equal to 1.
\end{itemize}

All of the above problems admit $(1-\nicefrac{1}{e})$-competitive algorithms under various assumptions. For vertex-weighted bipartite matching, Aggarwal, Goel, Karande and Mehta \cite{AggarwalGKM11} give a generalization of the Ranking algorithm which is $(1-\nicefrac{1}{e})$-competitive. For AdWords, \cite{MSVV07} show the same competitive ratio can be achieved for the fractional version of the problem, and more generally for the integral version under a small-bids assumption.\footnote{The small-bids assumption states that the ratio $\frac{b_e}{B_i}$ should be small, for any offline $i$ and any edge $e$ incident to $i$.} Edge-weighted bipartite matching and GAP are commonly studied under the ``free disposal" assumption, which is necessary to outmaneuver a trivial $\nicefrac{1}{n}$ hardness in these settings. Under free disposal, $(1-\nicefrac{1}{e})$-competitive algorithms can be obtained for fractional edge-weighted bipartite matching and GAP, and for GAP under a small-bids assumption \cite{FKMMP09}. 


Nevertheless, many natural online assignment problems exist which are not captured by the above settings. We illustrate these in the examples below. To our knowledge, no optimally competitive algorithms for these problems are implied by prior work. 

\begin{itemize}
    \item \textbf{{Laminar} Restricted Matchings.} Consider the AdWords problem. Suppose that in addition to the budget constraints for each offline node, we have a laminar family $\mathcal{S}$ of subsets of offline nodes, and there is a budget constraint for each $S \in \mathcal{S}$. For instance, this can model a setting where a company has several departments each with their own individual ad budget, and the company as a whole also has an additional budget constraint for the total amount that can be spent across all its departments. 
    
    \item \textbf{Matroid Coloring.} Suppose we have a matroid $\mathcal{M}$ whose elements arrive one by one online. We have $\Delta$ colors and may irrevocably assign a color to each element as it arrives, subject to the constraint that each color must be independent in $\mathcal{M}$. The objective is to color as many elements as possible. Two natural applications of this problem are: 
    \begin{itemize}
        \item \textit{Online Arboricity.} Suppose the edges of an undirected graph $G = (V, E)$ arrive online. When each edge arrives, we irrevocably decide whether or not to select it. The goal is to maintain the largest possible sub-graph with arboricity\footnote{The arboricity of a graph is the minimum number of forests required to cover its edges.} at most $\Delta$. One way to solve this problem is by modelling it as a matroid coloring problem, where $\mathcal{M}$ is the graphic matroid associated with $G$. The arboricity of a graph is a well studied property which has been used to maintain dynamic edge orientations \cite{CCHHQRS23} and proper colorings of a sub-graph \cite{CR22}.
        
        \item \textit{Flow Scheduling.} Suppose we have a network $N$, with integer capacities on the edges, that is known up front with a single sink $t$. 
        The times where the network is available for use is partitioned into $\Delta$ many time slots. Source vertices with unit demand appear one by one. When a source $s_j$ appears, we must schedule it in one of the time slots (or not schedule it at all). The goal is to maximize the  number of assignments, such that for every time slot, it is feasible to simultaneously send the flow for all sources scheduled in that slot. This is matroid coloring where $\mathcal{M}$ is a gammoid. 
    \end{itemize}

    \item \textbf{Coflows.} Say we have a computing resource which may process some tasks in parallel. For example, perhaps a single server rack is made up of different servers, each of which is equipped to handle only certain types of tasks. What a server rack can handle is modelled via a bipartite graph, with potential tasks on one side and servers of the server rack on the other. Tasks which may be processed together on a single rack form a transversal matroid; these are called \textit{coflows}, inspired by applications to MapReduce \cite{ChowdhuryS12}. Coflows governed by general matroid constraints have been studied \cite{JahanjouKR17, ImMPP19} in an offline setting. In an online formulation of this problem, we have $\Delta$ server racks and $n$ tasks arriving online. The tasks are splittable, but have different costs and values for being completed at different servers (i.e., some servers are closer or cheaper than others). We must irrevocably split tasks among computing resources, though we my drop tasks later on; the goal is to handle as many tasks as possible.


\end{itemize}

In this paper, we define the \textit{Online Submodular Assignment Problem}, which captures all of the problems described earlier as special cases. Via our results on this more general problem, we provide $1-\nicefrac{1}{e}$ competitive algorithms for all the problems above. 

\subsection{Problem Statement}

The \textit{Online Submodular Assignment Problem} (\textit{Online SAP} for brevity), is as follows. We have an (offline) monotone submodular\footnote{A function $f: 2^E \rightarrow \Rp$ is submodular if for all $A, B \subseteq E$, we have $f(A \cup B) \leq f(A) + f(B) - f(A \cap B)$. It is monotone if $f(A) \leq f(B)$ whenever $A \subseteq B$.} function $f$ over ground set $E$ with $f(\varnothing) = 0$ and $f(\{e\}) > 0$ for all $e \in E$.\footnote{This assumption is without loss of generality, since any $e$ with $f(\{e\}) = 0$ can be removed.} Every element $e \in E$ has a value $v_e$ and a cost $b_e$. The ground set, initially unknown, is partitioned into parts $\pa_1, \dots, \pa_m$ that arrive online one-by-one. Upon arrival, each $\pa_j$ reveals its contained elements along with their values and costs. We have offline access to an evaluation oracle for $f$ that may be called on any subset of elements revealed so far. 

When a part $Q_j$ arrives, we may select at most one element from $Q_j$. At any point, we also can choose to freely dispose of elements previously selected (known as the \textit{free disposal} assumption).\footnote{For our results this assumption is not used in settings where $v_e = b_e$ for all $e$, generalizing results for AdWords.} The goal is to choose a set $S^* \subseteq E$ so as to maximize $\sum_{e \in S^*} v_e$ while maintaining that $S^*$ satisfies the online assignment constraints
\[ 
    \abs{S^* \cap Q_j} \leq 1 \quad \text{for all $j \in \{1, \hdots, m\}$}
\]
and the offline submodular constraints
\[ 
    \sum_{e \in S} b_e \leq f(S) \quad \text{for all $S \subseteq S^*$} 
\] 
We note that, since online SAP is a generalization of edge weighted online bipartite matching, the free disposal assumption is necessary to avoid a trivial $\nicefrac{1}{n}$-hardness. 

In the fractional variant of this problem, we instead choose a fractional allocation $(x_e)_{e \in \pa_j}$ on the elements in $\pa_j$ when it arrives. In accord with the free disposal assumption, we may decrease $x_e$ at any point. The objective is to maximize the final value of $\sum_{e \in E} v_e x_e$. As before, we must allocate no more than 1 total unit to elements in each $\pa_j$. In other words, we have $\vec{x}(\pa_j) := \sum_{e \in \pa_j} x_e \leq 1$. Moreover, the total cost vector $\vec{bx} := (b_e x_e)_{e \in E}$ must obey submodular constraints defined by $f$, i.e., so that $\vec{bx}(S) = \sum_{e \in S} b_e x_e \leq f(S)$ for every $S \subseteq E$. Put another way, we must maintain a point $x \in \polym \cap \parm$, where $\polym$ and $\parm$ are defined respectively as: 



\[
    \polym := \left\{\vec{x} \in \Rp^E: \vec{bx}(S) \leq f(S) \text{ for every } S \subseteq E \right\}
\]

and 
\[
    \parm := \left\{\vec{x} \in \Rp^E : \vec{x}(\pa_j) \leq 1 \text{ for every } j = 1, \ldots, n\right\}.
\]

Note that Online SAP captures all three assignment problems posed in the introduction. We show how the Laminar Restricted Matching Problem can be modeled as Online SAP in \Cref{sec:laminar-adwords-osap}. In Online Matroid Coloring, to color a matroid $\mathcal{M}$ online with $\Delta$ colors, we consider the product matroid $\mathcal{M}^\Delta := \mathcal{M} \times \hdots \times \mathcal{M}$ and define the submodular constraint to be the rank function of the lifted matroid $f := \text{rank}_{\mathcal{M}^\Delta}$. The assignment constraint dictates each element may map to at most 1 color, and the submodular constraint $f := \text{rank}_{\mathcal{M}^\Delta}$ enforces that each color remains an independent set in $\mathcal{M}$. The third problem is a version of weighted matroid coloring, where elements have different valuations for different colors. 


\subsection{Our Contributions}\label{sec:intro-results-tech}

We introduce the online submodular assignment problem, which encompasses many online assignment problems. Some of these are well-known, including vertex- and edge-weighted bipartite matching, AdWords, and GAP. Others, such as laminar restricted matching, matroid coloring, and coflow assignment, have not been solved previously. Not only do we get optimal competitive ratios for online SAP in several settings, but in doing so we develop a novel framework for handling submodular constraints in online assignment. We consider the development of this machinery to be the primary contribution of our work, as we believe it may be broadly useful for future applications to problems with similar structure.

Our first main theorem concerns the fractional version of online SAP. 

\begin{theorem}\label{thm:online-SAP}
    There exists a deterministic $(1 - \nicefrac{1}{e})$-competitive algorithm for the fractional Online Submodular Assignment problem. 
\end{theorem}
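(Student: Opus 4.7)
The plan is to extend the primal-dual water-filling paradigm, standard for AdWords and related problems, to the polymatroid-constrained setting of SAP. The first step is to write down the natural LP relaxation and its dual. The primal maximizes $\sum_e v_e x_e$ subject to $x(Q_j) \le 1$ for every arriving part, $\sum_{e \in S} b_e x_e \le f(S)$ for every $S \subseteq E$, and $x \ge 0$. Introducing dual variables $\alpha_j \ge 0$ for the assignment constraints and $\beta_S \ge 0$ for the polymatroid constraints, the dual constraint for $e \in Q_j$ reads $\alpha_j + b_e \sum_{S \ni e} \beta_S \ge v_e$. Defining the \emph{water level} $w_e := b_e \sum_{S \ni e} \beta_S$, this becomes $\alpha_j + w_e \ge v_e$. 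The goal is to maintain online a primal-dual pair $(x, \alpha, \beta)$ that is dual-feasible and whose primal-to-dual ratio is at least $1 - 1/e$.

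The core technical step is to construct, for every point $y = (b_e x_e)_e \in \mathcal{P}_f$, a canonical water-level vector $w(y) \in \mathbb{R}^E_{\ge 0}$ with three key properties: (i) $w(y)$ is the gradient of a concave potential $\Phi : \mathcal{P}_f \to \mathbb{R}$ with $\Phi(0) = 0$; (ii) $w_e(y) = b_e \sum_{S \ni e} \beta_S(y)$ for an explicitly defined dual $\beta(y) \ge 0$; and (iii) $\Phi(y)$ is related to $\sum_S f(S)\beta_S(y) - \langle y, w(y)\rangle$ via a Fenchel-type identity, linking the potential to the dual objective. The principal partition sequence of the polymatroid provides the right decomposition: at each density level the tight closures form a lattice on which $\beta_S(y)$ can be supported, and two elements lying in the same closure share a common water level. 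Connections to submodular utility allocation equilibria give the vector $w(y)$ its well-behaved form.

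The algorithm is a continuous water-filling procedure with exponential rescaling. Fix $\phi(t) := \tfrac{1}{e-1}(e^t - 1)$. When $Q_j$ arrives, continuously increase $x_e$ for $e \in Q_j$ according to the rule ``allocate to the elements maximizing $v_e - \phi^{-1}(w_e)$'', updating $w$ as $y$ moves within $\mathcal{P}_f$, and setting $\alpha_j$ equal to the common threshold at termination (when $x(Q_j) = 1$ or no positive-slack elements remain). Feasibility in $\mathcal{P}_f$ is maintained throughout by redirecting the allocation along currently tight sets, in the spirit of classical polymatroid algorithms. The analysis then reduces to an infinitesimal primal-dual accounting: at each moment when mass $dx_e$ is added, the primal increment $v_e\,dx_e$ and the dual increment in $\sum_j \alpha_j + \sum_S f(S)\beta_S$ satisfy, by the exponential choice of $\phi$, the inequality $dP \ge (1 - 1/e)\, dD$. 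Integrating and applying weak LP duality delivers the competitive ratio.

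The principal obstacle is the second step: defining the water-level vector for general polymatroids and establishing its structural properties. In classical AdWords the polytope decomposes coordinate-wise, making such a construction essentially free, and water levels just depend on the per-coordinate usage. For a general polymatroid, however, tracking how $w(y)$ evolves as $y$ moves requires careful analysis of how tight sets appear, merge, and refine---precisely the content of the principal partition---as well as ensuring that the induced potential $\Phi$ is jointly concave so that the infinitesimal primal-dual inequality integrates cleanly across the entire run of the algorithm.
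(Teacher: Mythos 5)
Your high-level strategy is the same as the paper's: a primal-dual water-filling algorithm with exponential pricing, built on a canonical ``water level'' vector for polymatroids obtained from the principal partition / SUA-market structure. However, there are two genuine gaps. The first and most serious is that your sketch effectively handles only the case $v_e = b_e$. For general SAP the values and costs differ, free disposal is unavoidable (without it there is a $\nicefrac{1}{n}$ lower bound), and a \emph{single} water-level vector $w(y)$ with a single threshold rule does not suffice. The paper's construction uses an entire family $\vec{w}^t := \vec w^{((b_e x_e)_{e\colon v_e/b_e \ge t})}$ of water levels stratified by bang-per-buck, prices each element by $p_e = b_e\int_0^{v_e/b_e} g(w^t_e)\,dt$, and sets the dual to $\gamma_e = \int_0^\infty G(w^t_e)\,dt$. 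Your phrase ``redirecting the allocation along currently tight sets'' hides the key step: when $e$ lies in a tight set, the algorithm must \emph{decrease} the allocation of a previously chosen element $\ealt$, chosen as the minimum bang-per-buck element of the intersection of all tight sets containing $e$. This specific choice is what makes $d(L_f(G(\vec w^t)))$ vanish for $t < v_\ealt/b_\ealt$ and lets the dual increment match the primal increment $b_e(v_e/b_e - v_\ealt/b_\ealt)\,dx_e$. None of this is recoverable from your single-vector setup, and it is where most of the technical work lives. (A further wrinkle you would hit: $L_f$ of an integral of vectors must be bounded by the integral of $L_f$ via Jensen, which is why the paper works with a ``dual surrogate.'')

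The second gap is in the structural properties you posit for $w(y)$. You ask for $w(y)$ to be the gradient of a \emph{concave} potential $\Phi$, but the water-level map is coordinate-wise nondecreasing in the allocation (monotonicity is essential to the analysis), so if it is a gradient field at all it is the gradient of a \emph{convex} function; the sign is backwards for the role you want $\Phi$ to play. The properties the argument actually needs are different and more specific: monotonicity and continuity, feasibility indication ($\vec w \le 1$ iff $\vec{bx} \in \polym$), locality ($\partial w_{e_1}/\partial x_{e_2} = 0$ when the water levels differ), the duality $L_f(\vec w^{(x)}) = \sum_e x_e$, and the resulting chain rule $\partial L_f(G(\vec w^{(x)}))/\partial x_e = g(w_e)$. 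Your Fenchel-type identity is a plausible stand-in for the duality property, but you neither state it precisely nor indicate how to prove it, and establishing these properties (via the max-min saddle point and the combinatorial contraction algorithm) is the other half of the paper's contribution. As written, the proposal is a correct roadmap for the warm-up case (matroid intersection / AdWords) but does not yet constitute a proof of the general theorem.
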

We note that the $(1 - \nicefrac{1}{e})$ competitive ratio is tight, as there is a matching upper bound even for the special case of fractional online bipartite matching \cite{feige2020tighter}.

Next, we show that our fractional algorithm can be adapted to an integral algorithm under a ``small bids'' assumption. This type of assumption is often made in the AdWords setting~\cite{MSVV07,FKMMP09}, where the costs (or ``bids") are assumed to be small compared to the budgets of the advertisers. Specifically, the cost $b_e$ of an offline vertex is assumed to at most a $\varepsilon$-fraction of the total budget of its offline vertex for some $\epsilon > 0$. 
We note that it is still an open problem to determine the optimal competitive ratio for integral AdWords without the assumption of small bids; only recently have researchers developed an algorithm that achieves a competitive ratio better than $\nicefrac{1}{2}$ \cite{huang2020adwords}.

Our next result concerns the integral version of online SAP under a small bids assumption about the marginal functions $f_T(S) := f(S \cup T) - f(T)$. This generalizes the small bids assumption for AdWords. 
\begin{restatable}[Small Bids]{assumption}{smallbids}
    \label{ass:small_bids}
 Assume there exists some $\epsilon > 0$ such that for all $e \in E$ and $T \subseteq E$ with $f_T(\{e\}) > 0$, we have $b_e \leq \epsilon f_T(\{e\})$. 
\end{restatable} 


\begin{restatable}{theorem}{thmsmallbids}
    \label{thm:small-bids}
    Assume $b_e = v_e$ for all $e \in E$. Then, under the small bids assumption (\Cref{ass:small_bids}), there is a deterministic integral algorithm for online SAP which is $(1-O(\epsilon)) \cdot \left(1 - \frac{1}{e}\right)$-competitive. 
\end{restatable}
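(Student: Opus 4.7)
The plan is to integralize the fractional algorithm from Theorem \ref{thm:online-SAP} under the small bids assumption. Recall that the fractional algorithm maintains a water level vector $w \in \Rp^E$ associated with the polymatroid $\polym$, and when a part $Q_j$ arrives, distributes mass over $Q_j$ via a water-filling rule (favoring elements with small water level) and a corresponding exponential discount function $g(\cdot)$. I propose the following integral algorithm: when $Q_j$ arrives, pick the single element $e^* \in Q_j$ maximizing $v_e \cdot (1 - g(w_e))$, add it to $S^*$ (using free disposal if necessary to maintain feasibility in $\polym$), and update the water level vector accordingly.

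The analysis proceeds via a primal-dual argument paralleling the one used for Theorem \ref{thm:online-SAP} and the standard AdWords small-bids results \cite{MSVV07,FKMMP09}. Define dual variables $\alpha_e$ derived from the final water levels of the elements and $\beta_j := v_{e^*}(1 - g(w_{e^*}))$ for each arriving part. The first step is to show a primal-to-dual revenue-matching identity: since $b_e = v_e$, each selection of $e^*$ drives the water level up by an amount commensurate with $v_{e^*}$, and the usual invariant $\sum_{e \in S^*} v_e \geq (1 - 1/e)(\sum_e \alpha_e + \sum_j \beta_j)$ can be maintained up to a multiplicative loss that vanishes as $\epsilon \to 0$.

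The main obstacle, and the place where the small bids assumption is crucial, lies in establishing approximate dual feasibility: for every part $Q_j$ and every $e \in Q_j$, we need $\alpha_e + \beta_j \geq (1 - O(\epsilon)) v_e$. In the fractional algorithm this holds exactly because the water level moves continuously and the algorithm's greedy choice guarantees $\beta_j \geq v_e(1 - g(w_e^-))$ for the water level $w_e^-$ just before $Q_j$ is processed. Integrally, selecting $e^*$ causes a discrete jump in the water level of elements lying in the same principal partition piece as $e^*$. The small bids assumption $b_{e^*} \leq \epsilon \cdot f_T(\{e^*\})$ bounds the size of this jump: the water level rises by at most $O(\epsilon)$ in the relevant coordinates, so the quantity $(1 - g(w_e))$ that $\beta_j$ is compared against differs from its ``pre-jump'' value by at most a $(1 - O(\epsilon))$ factor. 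Summing the loss across all $m$ parts preserves a multiplicative $(1 - O(\epsilon))$ factor, since the bound is per-coordinate rather than cumulative. Combining approximate feasibility with the revenue identity and invoking weak LP duality (dual $\geq \OPT$) then yields the claimed $(1 - O(\epsilon))(1 - 1/e)$ competitive ratio.
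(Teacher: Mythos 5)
Your high-level skeleton matches the paper's: greedily select $e^*\in Q_j$ maximizing $b_e(1-g(w_e))$, run a primal--dual analysis with water-level-based duals, and use the small-bids assumption to show that a single integral selection moves the water levels by only $O(\epsilon)$ (the paper isolates this as a Lipschitz lemma, \Cref{lem:wl-lipschitz}). However, there are two concrete gaps. First, feasibility: you say you add $e^*$ to $S^*$ ``using free disposal if necessary,'' but in the AdWords case $b_e=v_e$ disposing of a previously selected element destroys primal value, and your revenue-matching identity does not account for this; moreover the theorem is supposed to hold without free disposal in this regime. The paper instead runs the algorithm against the scaled function $f'=(1-\epsilon)f$ and only selects elements with $w_e^{(\vec{bx})}<1$ under $f'$; the Lipschitz property then guarantees the final water levels under $f'$ are at most $1+\epsilon$, hence at most $(1+\epsilon)(1-\epsilon)<1$ under $f$, so the integral allocation is always feasible and nothing is ever disposed. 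Some such mechanism is needed, and ``free disposal if necessary'' does not supply it.

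Second, your primal-exceeds-dual step is asserted rather than proved, and it is the place where the real work happens. The dual objective is not $\sum_e \alpha_e + \sum_j \beta_j$ but $L_{f}(\vec\gamma)+\sum_j\beta_j$ (equivalently $\sum_S f(S)\alpha_S+\sum_j\beta_j$), and one must bound the discrete increment $L_{f'}(G(\vec w^{(\vec{bx}')}))-L_{f'}(G(\vec w^{(\vec{bx})}))$ caused by adding a whole unit of $e_j$. The paper does this via the mean value theorem combined with the water-level chain rule (\Cref{lem:L-G-derivative}), obtaining $b_{e_j}\,g(w_{e_j}^{(\vec{bz})})$ for an intermediate point $\vec z$, and then uses the Lipschitz lemma to bound $g(w_{e_j}^{(\vec{bz})})\le g(w_{e_j}^{(\vec{bx})})+\tfrac{\epsilon}{1-\epsilon}$, giving $\Delta D\le \tfrac{1}{1-\epsilon}\Delta P$. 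This is where the dominant $(1-O(\epsilon))$ loss enters (together with the $f\to f'$ scaling); by contrast, approximate dual feasibility in the paper is clean, yielding exactly $(1-\tfrac1e)b_e$ with no $\epsilon$ loss, because $\gamma_e$ is evaluated at the final (larger) water level while $\beta_j$ is set from the water level at arrival time. So your attribution of the $\epsilon$ loss to dual feasibility, and your silence on the Lov\'asz-extension increment, both need to be repaired for the argument to go through.
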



In addition to our result in the small bids setting, we also obtain $1 - \nicefrac{1}{e}$ competitiveness in a special case of the integral setting without the need for a small bids assumption. Suppose that the submodular function $f$ is the rank function of a cross-product matroid $ \mathcal{M} := \mathcal{M}_1 \times \hdots \times \mathcal{M}_n$, and each part $\pa_j$ contains at most one element from each $\mathcal{M}_i$. Then, assuming $b_e = v_e = 1$ for all $e$, this case of Online SAP is equivalent to the Online Submodular Welfare Maximization problem where the agents have matroid rank valuations. 

The Online Submodular Welfare Maximization Problem (OSWM), is the problem of assigning $m$ indivisible items, which arrive online, to $n$ agents with utility functions $f_i: 2^{[m]} \to \Rp$. Each utility function $f_i$ is assumed to be a monotone, submodular function on $[m]$. The goal is to find an assignment $\sigma: [m] \to [n]$ that maximizes the total welfare of the agents, i.e., the sum of the utilities $\sum_{i = 1}^n f_i(\sigma^{-1}(i))$. In the offline setting, a $1 - \nicefrac{1}{e}$ approximation algorithm for OSWM can be achieved using an algorithm for Monotone Submodular Maximization subject to a matroid constraint~\cite{CalinescuCPV11}. Surprisingly however, Kapralov, Post, and Vondr\'ak ~\cite{kapralovPV2013online} show that, in the online setting, achieving an competitive ratio greater than $\nicefrac{1}{2}$ in polynomial time ($\nicefrac{1}{2}$ is achieved trivially with the Greedy algorithm) is impossible unless NP = RP. By providing an integral algorithm for Online SAP in this special case, we are able to show that this barrier can be circumvented if the class of monotone submodular utility functions is restricted to rank functions of a matroid over ground set $[m]$.

\begin{theorem}\label{thm:matroid-oswm}
There exists a randomized polynomial time $(1-\nicefrac{1}{e})$-competitive algorithm for Online Submodular Welfare Maximization when the agents' utility functions are matroid rank functions. 
\end{theorem}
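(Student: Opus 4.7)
My plan is to reduce Online Submodular Welfare Maximization with matroid rank utilities to the special case of integral Online SAP described in the preceding paragraph, and then combine the fractional $(1-\nicefrac{1}{e})$-competitive algorithm from \Cref{thm:online-SAP} with a lossless online rounding step enabled by matroid intersection structure.

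First, I would formalize the reduction. Take the ground set $E = [m] \times [n]$, where element $(j, i)$ represents assigning item $j$ to agent $i$, and set $v_e = b_e = 1$ for every $e \in E$. Let $f$ be the rank function of the cross-product matroid $\mathcal{M}_1 \times \cdots \times \mathcal{M}_n$, a monotone submodular function. When item $j$ arrives it reveals the part $\pa_j = \{(j, 1), \ldots, (j, n)\}$. The SAP online constraint $|S^* \cap \pa_j| \leq 1$ ensures that each item is assigned to at most one agent, and the submodular constraint $\sum_{e \in S} b_e \leq f(S)$ is equivalent to requiring that, for each agent $i$, the items assigned to $i$ form an independent set in $\mathcal{M}_i$. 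Thus $\sum_{e \in S^*} v_e = \sum_i |\sigma^{-1}(i)| = \sum_i f_i(\sigma^{-1}(i))$ is precisely the welfare of the induced assignment $\sigma$, so any integral $(1-\nicefrac{1}{e})$-competitive algorithm for this instance of Online SAP immediately proves \Cref{thm:matroid-oswm}.

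Next, I would run the fractional algorithm from \Cref{thm:online-SAP} internally to maintain a vector $x \in \polym \cap \parm$ with $\sum_e v_e x_e \geq (1 - \nicefrac{1}{e}) \cdot \OPT$. The structural observation enabling lossless rounding is that in this setting both $\polym$ (the matroid polytope of the cross-product matroid, since $f$ is its rank function) and $\parm$ (a partition matroid polytope over items) are matroid polytopes, so by Edmonds' matroid intersection theorem $\polym \cap \parm$ is an integral polytope. Consequently $x$ is a convex combination of feasible integral assignments, and sampling one matches the fractional objective in expectation. To carry out the rounding \emph{online}, I would maintain, as $x$ evolves, a randomized integral solution $y \in \polym \cap \parm \cap \{0,1\}^E$ with $\E[y_e] = x_e$ for every $e$. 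When a new item arrives and the fractional algorithm updates $x$, I would update $y$ via matroid-intersection exchange moves coupled so that the coordinate-wise marginals are preserved while integrality is restored. Taking expectation then yields $\E\bigl[\sum_e v_e y_e\bigr] = \sum_e v_e x_e \geq (1 - \nicefrac{1}{e}) \cdot \OPT$.

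The main obstacle is implementing this online rounding in randomized polynomial time while preserving all marginals exactly. Offline, randomized swap and pipage rounding are marginal-preserving for a single matroid, but extending such guarantees to matroid intersection—where the polytope is integral yet lacks a simple base-exchange structure—requires additional care. I would argue correctness by showing that each incremental change in $x$ produced by the fractional algorithm (a small allocation added on $\pa_j$, together with any free disposal on earlier elements) can be realized as a local distribution over feasible exchange moves on $y$ whose expected effect reproduces the update to $x$ coordinate-wise; efficiency should follow from using only polynomial-time independence and rank oracles for the $\mathcal{M}_i$'s.
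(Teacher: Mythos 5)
Your reduction of OSWM with matroid rank valuations to integral Online SAP with $v_e = b_e = 1$ and $f = \rank_{\mathcal{M}_1 \times \cdots \times \mathcal{M}_n}$ is correct and matches the framing in the paper. The gap is in the second half: the entire difficulty of the theorem is hidden inside your claim that one can maintain, \emph{online}, a random integral point $y \in \polym \cap \parm$ with $\E[y_e] = x_e$ as the water-filling solution $x$ evolves. Integrality of $\polym \cap \parm$ (Edmonds) only tells you that the \emph{final} $x$ is a convex combination of integral solutions; it says nothing about whether such a decomposition can be maintained by irrevocable online commitments. Lossless marginal-preserving online rounding is not a known technique even for the special case of a partition matroid intersected with a partition matroid (i.e., ordinary online bipartite matching), and the surrounding literature strongly indicates it does not exist: online contention resolution and online correlated selection schemes provably lose constant factors, and if your coupling existed it would immediately upgrade several fractional $(1-\nicefrac{1}{e})$ results to integral ones that remain open (e.g., Wang--Wong's fractional matroid-constrained matching result, whose integral version is only known under random-order arrivals, as noted in the related work). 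The phrase ``matroid-intersection exchange moves coupled so that the coordinate-wise marginals are preserved'' is precisely the statement that needs a proof, and no such proof is sketched; the obstruction is that an exchange move must be committed before future parts are revealed, so feasibility of the fractional update does not imply the existence of a feasibility-preserving integral coupling.

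The paper takes a genuinely different route that sidesteps rounding entirely: it runs a direct integral randomized algorithm (Matroidal RANKING, \Cref{alg:OSWM}), in which each agent draws an independent seed $r_i \in [0,1]$ and each arriving item goes to the highest-priority agent for whom it has positive marginal rank. The competitive ratio is then established by a randomized primal-dual argument: a dual solution is built from the realized allocation ($\alpha_{i,S_i} = a_i g(r_i)$ on the span $S_i$ of agent $i$'s items, $\beta_j = a_i(1-g(r_i))$ for the agent receiving $j$), its value equals the welfare exactly, and approximate dual feasibility holds \emph{in expectation} over $r_{\hat i}$ via a critical-threshold argument supported by span-monotonicity invariants (\Cref{lem:OSWM_invariants}). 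If you want to complete a proof along your lines you would need to either prove the online rounding lemma (which would be a significant standalone result) or switch to an argument of this RANKING type.
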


This setting captures (integral) online matroid coloring. Indeed, given a matroid $\mathcal{M}$ we seek to color online, each item represents an edge and each agent represents a color class. Thus, our result implies a $(1-\frac{1}{e})$-competitive algorithm for integral online matroid coloring.



\subsection{Technical Overview}\label{sec:tech-overview}

We primarily examine online submodular assignment in the fractional setting, as the same techniques yield results for the integral setting with small bids. For this problem, we will employ a continuous pricing-based allocation, in which we place prices $p_e$ on elements $e \in E$ and allocate continuously to the element maximizing utility $v_e - p_e$. This approach is similar to algorithms by \cite{MSVV07}, \cite{FKMMP09}, and \cite{KP2000}, which have yielded $(1-\nicefrac{1}{e})$-competitive algorithms for a wide range of online matching problems, including edge weighted online bipartite matching, AdWords, and GAP. Each of these can be interpreted as general forms of the classic water filling algorithm for online bipartite matching.

In standard online matching, the water-filling algorithm keeps track of a water level for each offline vertex, which is the fractional amount it has been filled so far. Upon the arrival of each online vertex, the algorithm continuously sends water to its neighbors with the current lowest water level, until either all neighbors have been filled or the arriving vertex has been depleted.

\paragraph{Submodular Water Levels} The key challenge in obtaining a water-filling type algorithm for online SAP is adapting the concept of ``water levels" from prior work on matchings, to a general submodular constrained linear program. Intuitively, given a fractional allocation $\{x_e\}_{e \in E}$, we would like to assign a water level $w_e$ to element $e$ which represents how ``filled" it is under the allocation $\vec x$. This water level is then used to determine the price of each element, with a higher water level leading to a higher price. 


In online bipartite matching, the water level of an edge $(i,j)$ (where $i$ is offline and $j$ is online), is the fraction of vertex $i$'s capacity occupied by $\vec x$, namely $\sum_{j' \sim i} x_{ij'}$. In the submodular assignment setting, edges become elements $e \in E$, but there is no longer a notion of unit-capacity vertices. Suppose momentarily that $b_e = 1$ for all $e$, for ease of presentation. Then each $e$ is involved in many submodular constraints of the form $\vec{x}(S) \leq f(S)$ for each $S \ni e$, and it is not immediately clear how these constraints should be aggregated into a single water level $w_e$.

Nevertheless, we show that there exist such ``submodular water levels'' that can be computed efficiently, and that they satisfy several key properties that make a water-filling type algorithm possible. We present three different ways of describing the submodular water levels. 
\begin{enumerate}[1.]
    \item \textbf{A combinatorial description.} Water levels can be viewed as a measure of density. We describe an iterative process in \Cref{alg:comb-dec} which finds a densest set under $\vec x$ (i.e. the set with the least multiplicative slack in its constraint), contracts this set, and repeats until all elements are contracted. The density at which an element gets contracted is the water level of that element. This algorithm also demonstrates that the level sets of $\vec{w}$ correspond to a weighted version of the \textit{principal partition sequence} studied in \cite{Narayanan91, Fujishige09} for poly-matroids.
        
    \item \textbf{A max-min relation.} The water level of an individual element $e$ can also be described directly via a simple max-min formula $\max_{S \ni e} \min_{T} \frac{\vec x(S\setminus T)}{f_T(S)}$, where $f_T$ denotes the \textit{contraction} of $f$ by $T$. This definition exhibits a surprising minimax theorem showing that the water level is also $\min_{T \not \ni e} \max_S \frac{\vec x(S\setminus T)}{f_T(S)}$, and moreover, a ``saddle point" pair of optimizing sets $S^*, T^*$ for both of these expressions can be derived from the combinatorial description of water levels. Importantly, this definition immediately demonstrates that the water level of each element is a continuous piecewise linear function of $\vec x$. 
    
    \item \textbf{A convex program.} The concept of water levels also arise from market equilibria. Jain and Vazirani \cite{JV10} introduced a notion of \textit{submodular utility allocation (SUA) markets} as a generalization of linear Fisher markets  \cite[Chapter 5]{AGT}. In a submodular utility allocation market, we have some items $A$, each with weight $m_a$. We would like to fractionally select a set of items, with the goal of maximizing the Nash social welfare of the items; however, the set of items picked must be feasible in a poly-matroid defined by a submodular function $f$. 
    It turns out the water level vector is precisely the optimal solution of an SUA market.
\end{enumerate}

Once we have our notion of water levels defined, we can define an algorithm for online SAP which is inspired by prior work in matching settings. At a high level, when a part $Q_j$ arrives, we assign a utility to each of its elements depending on its value and current water level. We make a small increase to the allocation of the highest utility element, and continue this process until we can no longer increase any element while preserving feasibility.

The second key technical challenge in performing a water-filling type analysis is understanding how the water levels change throughout the algorithm. We use a primal-dual framework, as in \cite{DevanurJK12}, to analyze the approximation ratio of our algorithm. In this framework, the algorithm maintains a primal solution at every step, and for the sake of analysis, we continuously maintain a dual solution in parallel. This dual solution will depend on the current water levels of the primal allocation. Critically, we must prove that throughout the algorithm, the dual value is at most the primal value and the dual solution is approximately feasible. To perform such analysis, we need a strong structural understanding of the water level vector $\vec w$ and how it changes with $\vec x$. We show that the vector $\vec w$ is remarkably well-behaved, and each of our three viewpoints on water levels sheds light on the properties that we will use in our analysis:
\begin{itemize}
    \item \textbf{Monotonicity and Continuity.} For each $e \in E$, its water level $w_e$ is monotone and continuous in the allocation vector $\vec{x}$ (\Cref{cor:wl-mono}).
    \item \textbf{Feasibility Indication.} The water level vector $\vec{w}$ indicates the feasibility of allocation vector $\vec{x}$ (\Cref{cor:wl-feasibility}). In particular, $\vec{x}$ is feasible if and only if the water levels are at most 1.
    \item \textbf{Locality.} Changing the allocation $x_e$ of some element by a small amount can only affect the the water level of elements whose water level is equal to $e$'s water level.
    
    \item \textbf{Duality.} The water levels satisfy a duality property with the Lov\'asz extension of $f$ (\Cref{lem:wl-duality}). That is $L_f(\vec{w}) = \sum_{e \in E} x_e$. This property can be derived from the KKT conditions of the convex program formulation of water levels.
\end{itemize} 
These properties allow us to keep track of how the dual objective of SAP changes with incremental changes of the primal and dual assignments. Specifically, careful setting of the dual ensures that approximate feasibility is maintained. Moreover, we prove a derivative formula for the Lov\'asz extension of $f$ in terms of the current water levels (\Cref{lem:L-G-derivative}) which allows us to upper bound the rate of change of the dual objective by that of the primal.



\paragraph{OSWM with Matroid Rank Valuations}
Our algorithm for Online Submodular Welfare Maximization for matroidal rank valuations is a natural generalization of the classical RANKING algorithm for online bipartite matching. We interpret the arriving nodes as items which we assign to the offline nodes (the agents). At the very beginning of the algorithm, we randomly permute the agents. When an item arrives, we assign it to the ``available" agent  of highest priority. 

In the bipartite matching case, and even for online $b$-matching (studied in~\cite{AlbersS21}), whether an agent is available simply depends on how many items have been allocated to this agent so far. In our setting, we say that an item is available to an agent if the agent's marginal utility for the item is non-zero, given the items already allocated to this agent. Hence, the availability of an agent depends on the item being considered and changes throughout the course of the algorithm.

Despite this, we show that our Matroidal RANKING algorithm always yields a tight $(1-\frac{1}{e})$-competitive allocation in expectation. Similar to the analysis in~\cite{DevanurJK12}, we construct an approximately feasible dual solution which lower bounds the welfare of our allocation. The main technical hurdle is to define a suitable threshold $r^*_{ij}$ for each agent-item pair $(i,j)$ so that the dual assignment (which is constructed online) is approximately feasible, and whose objective value matches the welfare of our online assignment.

Finally, we show that our algorithm can be extended in a natural way to the case where each agent has a non-negative weight, and we seek to maximize the weighted sum of their utilities.



\subsection{Related Work}

\paragraph{Online Matching} There is an extensive line of work on online matching, starting with the work of Karp, Vazirani and Vazirani \cite{KVV90}, who gave a $(1-\nicefrac{1}{e})$-competitive algorithm for online bipartite matching in the adversarial order setting. The same competitive ratio was extended to the vertex-weighted setting in \cite{AggarwalGKM11}, and further to the vertex-weighted $b$-matching setting in \cite{AlbersS21}. Devanur, Jain, and Kleinberg~\cite{DevanurJK12} showed how the results in \cite{KVV90} and \cite{AggarwalGKM11} could be derived using the online primal-dual framework, which unified and simplified the existing analyses. While the RANKING algorithm requires $O(n \log n)$ bits of randomness, Buchbinder, Naor, and Wajc~\cite{BNW23} provide a randomized rounding scheme requiring only $(1 \pm o(1)) \log \log n$ bits of randomness. 

For edge-weighted online bipartite matching, \cite{FahrbachHTZ20} were the first to break the $\frac{1}{2}$-competitive barrier. This has been subsequently refined in \cite{ShinA21, GaoHHNYZ21, BlancC21} to a 0.5368-competitive ratio. Online edge-weighted bipartite matching has also been studied in the vertex-arrival Bayesian setting, also known as the \emph{Ride Hail} problem; \cite{PPSW21} give a better-than-$\nicefrac{1}{2}$ approximation to the optimal online policy. See also~\cite{feldman2009},~\cite{Hwang2022}, and~\cite{Shuyi2024} for other Bayesian formulations which compare to the optimal offline solution. Further generalizations of online bipartite matchings have been studied, where the \textit{online} constraint is replaced with general allocation constraints \cite{DHKMY16}.

\paragraph{Online Matroid Matching} Closely related to our setting is the work of Wang and Wong~\cite{WangWong2016} who studied online bipartite matching under matroidal constraints on the matched offline nodes. This captures the special case of online SAP where $b_e = v_e = 1$ for all $e \in E$. They give a $(1-\frac{1}{e})$-competitive algorithm for the fractional version of this problem, and for the integral case assuming random-order arrivals. Zhang and Conitzer~\cite{HanruiConitzer2020} study the same model but with matroidal constraints on both the offline and online nodes, giving the same optimal $(1- \frac{1}{e})$-competitive guarantees. 

While these papers also generalize the  classic water-filling approach to accommodate submodular constraints, our work crucially differs in that (1) we allow non-uniform values $v_e$ and costs $b_e$, and (2) we greatly simplify the primal-dual proofs through the development of our general framework of water levels. 


\paragraph{Online Matroid Intersection} Offline matroid intersection captures a broad class of combinatorial problems and has been well studied in both polyhedral theory and algorithms (for a survey on this topic, see \cite[Chapter 41]{Schrijver03}). In \cite{GS17}, Guruganesh and Singla study an online matroid intersection problem, with edges arriving in random order, beating the $\nicefrac{1}{2}$-competitive ratio achieved by Greedy. This was very recently improved by Huang and Sellier \cite{HS23} to $\nicefrac{2}{3} + \varepsilon$. In \cite{BGHKS23}, the part-arrival online matroid intersection model is considered. They instead study the problem of \textit{maintaining} a max cardinality independent set, while minimizing recourse.

\paragraph{Offline Submodular Welfare Maximization} The offline variant of Submodular Welfare Maximization problem can be cast as a problem of maximizing a monotone submodular function subject to a (partition) matroid constraint. While a simple greedy algorithm achieves an approximation ratio of $\frac{1}{2}$, \cite{CalinescuCPV11} gave an improved $(1-\nicefrac{1}{e})$-approximation using the Continuous Greedy algorithm followed by Pipage Rounding. There can be no $(1-\nicefrac{1}{e}+\varepsilon)$-approximation unless P=NP since this problem captures max-$k$-cover as a special case~\cite{feige1998threshold}. 

When the monotone submodular function is the rank function of a matroid, the welfare maximizing offline allocation can be computed optimally. These allocations have been well-studied in the context of designing socially optimal allocations which satisfy certain desirable fairness constraints~\cite{viswanathan2023general, dror2023fair}. For example, in~\cite{benabbou2021finding} it is shown that an optimal allocation which is envy-free up to one item (EFX)  exists and can be computed efficiently. In~\cite{babaioff2021fair} they study the case of valuations which have binary marginals, but which are not necessarily submodular~\cite{babaioff2021fair}. They also give truthfulness guarantees for private valuations, which has been further studied in~\cite{barman2022truthful}. 

\paragraph{Principal Partition} 
The \textit{principal partition} of a matroid is related to our definition of water levels. It is precisely the nested sets found in \Cref{alg:comb-dec}, when $\vec x$ is the all ones vector. There have been several works studying principal partitions, including generalizations to arbitrary vectors $\vec x \in \mathbb{R}^E$ and extensions from rank functions to submodular functions \cite{Narayanan91, Fujishige09}. The objects studied in \cite{Narayanan91} and \cite{Fujishige09} are closely related to our water levels; in our work, we (1) shift perspective from the family of nested sets in \cite{Fujishige09} to the properties of a single vector $\vec w$, and (2) study properties of how $\vec w$ changes dynamically with the weights given by $\vec x$, such as monotonicity (\Cref{cor:wl-mono}), duality (\Cref{lem:wl-duality}), and locality (\Cref{lem:wl-change}). These properties are clearly visible with our new perspective. We make the novel connection between principal partitions and water-filling in online bipartite matching. 

The principal partition has been used for constant competitive algorithms in the matroid secretary problem under random assignment \cite{Soto13}. Huang and Sellier \cite{HS23} also use the principal partition to get an improved approximation ratio of $\nicefrac{2}{3} - \varepsilon$ for online matroid intersection with random order arrivals in a stream. Chandrasekaran and Wang \cite{CW23} use the principal partition sequence for improved approximations in the submodular $k$-partition problem. 

\subsection{Organization of the Paper}
First, in \Cref{sec:prelims}, we introduce some preliminaries. Next, in \Cref{sec:water-filling}, we develop a theory of water levels and provide three equivalent formulations. In \Cref{sec:fractional}, we look at a special and illuminating case of online SAP: online matroid intersection with part arrivals. We give a water-filling algorithm for fractional online poly-matroid intersection and prove that it achieves a competitive ratio of $1-\nicefrac{1}{e}$. \Cref{sec:gap} gives the general case analysis for fractional online SAP, as well as online SAP with a small bids assumption. In \Cref{sec:oswm}, we give a $(1-\nicefrac{1}{e})$-competitive algorithm for (integral) Online Submodular Welfare Maximization with matroidal utilities.

\label{sec:Related-Work}

\section{Preliminaries}
\label{sec:prelims}







We analyze the algorithm for fractional online SAP via the primal-dual framework. The primal linear program (LP) describing Online Submodular Assignment, as well as its dual are:
\begin{align*}
\max  \quad &\sum_{e \in E}  v_ex_e  &\qquad  \min \quad &\sum_{S \subseteq E} f(S)\cdot \alpha_S + \sum_{j=1}^n \beta_j \\
\mbox{s.t.} \quad &\ts\sum_{e \in \pa_j} x_{e} \leq 1,   \qquad\quad \forall j \in [n] & \mbox{s.t.} \quad &b_e \ts\sum_{S \ni e} \alpha_S + \beta_{j(e)} \geq v_e, \qquad \forall e \in E \\
&\ts\sum_{e \in S} b_ex_{e} \leq f(S), \qquad \forall S \subseteq E & &\alpha_S, \beta_j \geq 0, \qquad\qquad\qquad\ \; \forall S \subseteq E,\, j \in [n]. \\
&x_e \geq 0,  \qquad\qquad\qquad\ \, \forall e \in E.
\end{align*}

In the dual, $j(e)$ denotes the index $j$ for which $e \in \pa_j$. The solution to the primal LP is the (fractional) optimal \emph{offline} solution to a given instance, and we use $\OPT$ to denote its value. By strong duality, the optimal values of the primal and dual LPs are the same. 


It will be useful for our analysis to re-write the dual objective in a different form. Let us define $\vec \gamma \in \Rp^E$ by $\gamma_e := \sum_{S \ni e} \alpha_S$. By a standard uncrossing argument, we may assume that the optimal dual $\vec \alpha$ is supported on a nested family of sets.   Thus, we can recover the part of the objective $\sum_{S \subseteq E} f(S) \cdot \alpha_S$ from $\vec 
 \gamma$ as
\begin{equation}
\label{eq:lovasz}
\sum_{S \subseteq E} f(S) \cdot \alpha_S = \int_0^\infty f(\{e \in E : \gamma_e \geq t\})\,dt.
\end{equation}
The right-hand-side integral is exactly the \emph{Lov\'asz extension} $L_f$ of a submodular function $f$. This is a natural continuous extension of the submodular function $f$ which has been studied in many contexts. Although equivalent formulations exist, it will be convenient for us to use the following definition. 

\begin{definition}\label{def:wl}
    Let $f$ be a monotone submodular function on $E$ with $f(\varnothing) = 0$. The \textit{Lov\'asz extension} $L_f : \Rp^E \to \R$ of $f$ is 
    \[
    L_f(\vec w) := \int_0^\infty f(\{e \in E : w_e \geq t\})\, dt.
    \]
\end{definition}
With this definition, we can write our dual in terms of $\gamma$ as follows.
\begin{align*}
    \min \quad &L_f(\vec \gamma) + \sum_{j = 1}^n \beta_j \\
    \text{s.t} \quad &b_e \gamma_e + \beta_{j(e)} \geq v_e, \qquad \text{for all $e \in E$} \\
    &\vec \gamma, \vec \beta \geq 0.
\end{align*}

\section{Water Level Machinery}
\label{sec:water-filling}

Recall that, we want to assign each element $e \in E$ a water level $w_e = w\up x_e$ which depends on the current allocation $\vec x$, where $\vec x$ satisfies the submodular constraints given by $f$:
\[ \vec{x}(S) \leq f(S) \quad \forall S \subseteq E. \]
Notice that, for the sake of defining water levels, we are assuming unweighted costs (i.e. $b_e = 1$). This removes clutter from our definitions, as we can simply add weights later by taking $\vec w\up {bx}$.

To understand how we define this water level vector $\vec w = \vec w\up x \in \Rp^E$ of allocation $\vec x$, we first enumerate several properties that the water levels should satisfy in order for the water-filling algorithm to work as it does for online bipartite matching: 
\begin{enumerate}
    \item (Monotonicity) $\vec w\up x$ is coordinate-wise non-decreasing in $\vec x$. 
    \item (Indication of Feasibility) $\vec w\up x \leq 1$ if and only if $\vec{x}(S) \leq f(S)$ for all $S \subseteq E$.
    \item (Locality) If $w\up x_{e_1} \neq w\up x_{e_2}$, then $\frac{\partial w\up x_{e_1}}{\partial x_{e_2}} = 0$.
    \item (Duality) $L_f(\vec w\up x) = \sum_{e \in E} x_e$.
\end{enumerate}

Monotonicity and feasibility indication are natural properties which intuitively require that $w_e$ is an indicator of how close $x_e$ is to being part of a tight constraint.
The need for locality and duality is less obvious, but they are important for the details of the primal-dual analysis of water-filling. Specifically, this is because the dual value $\gamma_e$ of an element $e \in E$ will be defined as a function of the water level $w_e$. The locality and duality properties are needed to relate increases in the dual objective term $L_f(\vec \gamma)$ to increases to the primal objective $\sum_{e \in E} x_e$ as the algorithm progresses.


\subsection{Definition of Water Levels and Equivalent Formulations}

In order to define a water level vector that satisfies our desired properties, it will be convenient (and enlightening) to provide both algorithmic and static definitions, which we will prove are equivalent. By juggling three different definitions, we are able to provide succinct proofs of the four key properties of water levels. 


First, let's consider a naive construction $\widetilde w\up x_e$. For $\vec x \in \Rp^E$, we define $\widetilde {\vec w}\up x_e = \max_{S \ni e} \frac{\vec x(S)}{f(S)}$, i.e. the maximum density
of a poly-matroid constraint involving $x_e$. Such a definition clearly satisfies monotonicity and indication of feasibility. However, this definition does not capture the water levels from the classic bipartite matching setting (i.e. in a partition matroid), and we can see this already in the simple setting of $E = \{1, 2\}$ and $f(S) = |S|$. In such a setting, the poly-matroid effectively only has the constraints $0 \leq x_1, x_2 \leq 1$, so we intuitively should let $w_e = x_e$. However, notice that if $x_1 < x_2$, then we have $\widetilde w_1 = \max_{S\ni 1} \frac{x(S)}{f(S)} = \frac{x_1 + x_2}{2}$. We see that this construction deviates from what we expect, and indeed we also find that our desired locality and duality properties are not satisfied. This problem arises because the heavier element $x_2$ influences the density of the densest constraint on $x_1$, despite the two variables being functionally independent.

The critical insight is that we can prevent this undesirable behavior by contracting sets with larger density before assigning the values of $w_e$ to sets with lower density. This inspires the following formulation of water levels in \Cref{def:water-levels}. 


\begin{definition}\label{def:water-levels}
The \emph{water level vector} $\vec w\upn {\vec x, f} \in \Rp^E$ (or $\vec w \up x$ when $f$ is known) with respect to an allocation $\vec x$ in $\Rp^E$ and a monotone submodular function $f : 2^E \to \Rp$, is defined as
\[ 
w_e \up x = w\upn {\vec x, f}_e := \max_{S \ni e} \min_{\substack{T \subseteq E \\ f_T(\{e\}) \neq 0}}   \frac{\vec x(S \setminus T)}{f_T(S)}
\]
where $f_T(S)$ denotes the contracted function $f(S \cup T) - f(T)$.
\end{definition}

Not only does this modification fix the problems with our naive definition, but it reveals a great degree of hidden structure. First, it happens to be the case that the min and max in \Cref{def:water-levels} are reversible, i.e. $w\up x_e = \min_T \max_S \frac{\vec{x}(S \setminus T)}{f_T(S)}$. We will prove this by showing the existence of a saddle point $S^*, T^*$: 
\begin{equation}
        \min_{\substack{T \subseteq E \\ f_T(\{e\}) \neq 0}} \frac{\vec{x}(S^* \setminus T)}{f_{T}(S^*)} = \frac{\vec{x}(S^* \setminus T^*)}{f_{T^*}(S^*)} = \max_{S \ni e} \frac{\vec{x}(S \setminus T^*)}{f_{T^*}(S)}.
    \end{equation}

Furthermore, these optimal sets form a nested family. We can find $\varnothing = S_0 \subseteq S_1 \subseteq \dots \subseteq S_k = E$ such that for each $e \in S_{\ell+1} \setminus S_\ell$, the pair $(S_{\ell + 1}, S_\ell)$ is a saddle point for $w_e\up x$ in the max-min expression of \Cref{def:water-levels}. We find the nested family via an intuitive and efficiently computable combinatorial description of water levels, \Cref{alg:comb-dec}.

 \begin{algorithm}[h]
  \caption{\texttt{A Combinatorial Presentation of Water Levels}}\label{alg:comb-dec}
  \Input{A point $\vec x \in \Rp^E$.}
  \BlankLine
  
  Initialize $\ell \leftarrow 0$, $S_0 \leftarrow \varnothing$.\;
  \While{$S_\ell \neq E$}{
    Let $S_{\ell+1}$ be the unique maximal set\footnotemark~in 
    $$\arg \max_{\substack{S \subseteq E \\ S \setminus S_{\ell} \neq \varnothing}} \frac{\vec{x}(S \setminus S_\ell)}{f_{S_\ell}(S)},$$
    i.e. the largest densest set over the contracted polymatroid.\; \label{alg-line:4}
    Let $t_{\ell+1} = \frac{\vec{x}(S_{\ell + 1} \setminus S_\ell)}{f_{S_\ell}
(S_{\ell+1})}$ be the density of $S_{\ell+1}$.\;
    Set $\alg w_e \leftarrow t_{\ell+1}$ for all $e \in S_{\ell+1} \setminus S_{\ell}$.\;
    $\ell \leftarrow \ell + 1$.
}
\Return{$\alg{\vec w}$.}
\end{algorithm}
\footnotetext{The maximal such set is unique due to the submodularity of $f$.}

\begin{restatable}[Saddle Point for Water Levels]{theorem}{levelsets}
  \label{lem:wl-alg-equiv}
    For any monotone submodular function $f: 2^E \to \Rp$ with $f(\varnothing) = 0$ and $x \in \Rp^E$, the vector $\vec w = \vec w\up x$ in \Cref{def:water-levels} has the property
    \[
    w\up x_e := \max_{S \ni e} \min_{\substack{T \subseteq E \\ f_T(\{e\}) \neq 0}}   \frac{\vec{x}(S \setminus T)}{f_T(S)} =  \min_{\substack{T \subseteq E \\ f_T(\{e\}) \neq 0}} \max_{S \ni e}  \frac{\vec{x}(S \setminus T)}{f_T(S)}.
    \]
    Moreover, the output $\alg{\vec w}$ of \Cref{alg:comb-dec} is equal to $\vec w$.
\end{restatable}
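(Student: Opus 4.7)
The plan is to exhibit, for each $e \in S_{\ell+1} \setminus S_\ell$, the pair $(S^*, T^*) = (S_{\ell+1}, S_\ell)$ as a saddle point of the ratio $\frac{\vec x(S \setminus T)}{f_T(S)}$ over $S \ni e$ and $T$ with $f_T(\{e\}) \neq 0$, with common value $t_{\ell+1} = \alg w_e$. Once this is done, weak duality $\max_S \min_T \leq \min_T \max_S$ together with both sides equalling $t_{\ell+1}$ at the saddle point yields the minimax equality and the identity $w\up x_e = \alg w_e$. Before turning to the saddle point, I would dispense with two standard preliminaries: (i) \Cref{alg:comb-dec} is well-defined because the family of sets achieving the maximum density in any contracted polymatroid is closed under union (by submodularity of $f$ and modularity of $\vec x$), giving a unique maximal argmax; and (ii) the densities satisfy $t_1 > t_2 > \cdots > t_k$, since $t_{\ell+1} \geq t_\ell$ would make $S_{\ell+1}$ a densest set in the polymatroid contracted by $S_{\ell-1}$ via a weighted-average computation, contradicting the maximality of $S_\ell$.

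The maximization half of the saddle point is quick. Because $\frac{\vec x(S \setminus S_\ell)}{f_{S_\ell}(S)}$ depends on $S$ only through $S \cup S_\ell$ (both numerator and $f_{S_\ell}(S) = f(S \cup S_\ell) - f(S_\ell)$ are functions of $S \cup S_\ell$), the substitution $S \mapsto S \cup S_\ell$ shows $\max_{S \ni e} \frac{\vec x(S \setminus S_\ell)}{f_{S_\ell}(S)} = \max_{S \supseteq S_\ell \cup \{e\}} \frac{\vec x(S \setminus S_\ell)}{f_{S_\ell}(S)}$, which by the very definition of the algorithm equals $t_{\ell+1}$, attained at $S = S_{\ell+1}$. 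The minimization's easy upper bound $\min \leq t_{\ell+1}$ follows by choosing $T = S_\ell$, after checking that $f_{S_\ell}(\{e\}) > 0$; otherwise $e$ could be dropped from $S_{\ell+1}$ without changing the density, forcing $x_e = 0$ and allowing $e$ to be placed at an earlier layer of the chain without altering the output.

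The technical heart of the proof is the lower bound
\[
\vec x(S_{\ell+1} \setminus T) \geq t_{\ell+1} \cdot f_T(S_{\ell+1}) \qquad \text{for every } T \text{ with } f_T(\{e\}) \neq 0.
\]
I would decompose along the algorithm's layers $L_m := S_m \setminus S_{m-1}$, writing
\[
\vec x(S_{\ell+1} \setminus T) = \sum_{m=1}^{\ell+1} \vec x(L_m \setminus T), \qquad f_T(S_{\ell+1}) = \sum_{m=1}^{\ell+1} f_{T \cup S_{m-1}}(L_m \setminus T),
\]
and establish the layerwise inequality $\vec x(L_m \setminus T) \geq t_m \cdot f_{T \cup S_{m-1}}(L_m \setminus T)$. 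This follows in four steps: (i) split $\vec x(L_m \setminus T) = \vec x(L_m) - \vec x(L_m \cap T)$ by modularity of $\vec x$; (ii) use $\vec x(L_m) = t_m \, f_{S_{m-1}}(L_m)$ and $\vec x(L_m \cap T) \leq t_m \, f_{S_{m-1}}(L_m \cap T)$, both from the level-$m$ density bound in the polymatroid contracted by $S_{m-1}$; (iii) rewrite the difference $f_{S_{m-1}}(L_m) - f_{S_{m-1}}(L_m \cap T)$ as $f_{S_{m-1} \cup (L_m \cap T)}(L_m \setminus T)$; (iv) apply submodularity once more to replace the contraction $S_{m-1} \cup (L_m \cap T)$ by the larger $T \cup S_{m-1}$. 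Using $t_m \geq t_{\ell+1}$ and summing telescopes $\sum_m [f(T \cup S_m) - f(T \cup S_{m-1})]$ back to $f_T(S_{\ell+1})$, completing the lower bound.

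I expect the principal obstacle to be the careful bookkeeping in step (iv) above, since $T$ can intersect each layer $L_m$ in an arbitrary way and the correct chain of submodular contractions is easy to scramble. The crucial insight that makes the telescoping work is the strict monotonicity $t_1 > t_2 > \cdots > t_k$ of the algorithm's density sequence combined with the nested maximal-argmax structure of the $S_m$: without the ordering the layerwise inequalities would not be compatible after summation, and without the maximal-union property one could not reduce to the $S \supseteq S_\ell$ case in the maximization step.
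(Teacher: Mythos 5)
Your proposal is correct, and it shares the paper's skeleton --- exhibit the pair $(S_{\ell+1}, S_\ell)$ produced by \Cref{alg:comb-dec} as a saddle point for every $e \in S_{\ell+1}\setminus S_\ell$, then conclude via weak duality --- but your proof of the one hard inequality, $\vec{x}(S_{\ell+1}\setminus T) \geq t_{\ell+1}\, f_T(S_{\ell+1})$ for all admissible $T$, is genuinely different from the paper's. The paper proceeds by induction on the level and by contradiction: it takes a minimizing $T$, splits into the cases $S_\ell \subseteq T$ and $S_\ell \setminus T \neq \varnothing$, and in the latter case invokes the inductive hypothesis at an earlier level $\ell'$ to show that replacing $T$ by $S_{\ell'}\cup T$ would strictly improve the ratio, contradicting the choice of $T$. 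You instead give a direct, non-inductive computation: partition $S_{\ell+1}$ into the layers $L_m = S_m\setminus S_{m-1}$, telescope $f_T(S_{\ell+1}) = \sum_m f_{T\cup S_{m-1}}(L_m\setminus T)$, prove the layerwise bound $\vec{x}(L_m\setminus T)\geq t_m\, f_{T\cup S_{m-1}}(L_m\setminus T)$ using only the max-density property of level $m$ and one application of submodularity, and sum using $t_m \geq t_{\ell+1}$. Your route avoids the case analysis and the contradiction, and it isolates exactly where the monotonicity of the density sequence is used; the paper's route keeps each step local to a single level at the cost of carrying an inductive hypothesis. Two minor points to tighten: the bound $\vec{x}(L_m\cap T)\leq t_m f_{S_{m-1}}(L_m\cap T)$ needs the (trivial) case $L_m\cap T=\varnothing$ noted separately, and your handling of the admissibility of $T^*=S_\ell$ (i.e., that $f_{S_\ell}(\{e\})>0$ for $e\in S_{\ell+1}\setminus S_\ell$) is somewhat hand-wavy --- though the paper's proof silently assumes the same thing.
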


For readability, we delay the proof of the min-max property of \Cref{def:water-levels} and its equivalence to \Cref{alg:comb-dec} until \Cref{sec:wl-equiv}.

We also find an unexpected connection to market equilibria. Jain and Vazirani \cite{JV10} introduced a notion of \textit{submodular utility allocation markets} which can be described with the following convex program.
\begin{equation}\tag{SUA}\label{eq:SUA}
\begin{aligned}
    \max_u\quad &\ts \sum_{e \in E} m_e\log u_e \\
     \text{s.t.} \quad & \ts \sum_{e \in S} u_{e} \leq f(S), \quad \forall S \subseteq E\quad \qquad &&(\alpha_S) \\
    &u_{e} \geq 0. 
\end{aligned}
\end{equation}
It turns out the water levels of an allocation $\vec x$ 
can be computed from the optimal utilities of an SUA market where each element $e$ has weight $m_e := x_e$. \Cref{alg:comb-dec} gives us optimal duals to $\eqref{eq:SUA}$, which we also prove in \Cref{sec:wl-equiv}.

\begin{restatable}{theorem}{SUAthm}
  \label{thm:comb-dec-and-SUA}
Consider the vector $\alg{\vec w}$, the nested sets $S_1\subset \cdots \subset S_L$, and the levels $t_1,\ldots, t_L$
generated by \Cref{alg:comb-dec}. Then $t_1 > \cdots > t_L \geq 0$. Moreover, if we define 
\begin{align*}
    \alg \alpha_{S_L} &:= t_L \\
    \alg \alpha_{S_\ell} &:= t_{\ell} - t_{\ell + 1} \quad \ell = 1, \hdots, L - 1
\end{align*}
and $\alg \alpha_S = 0$ for all other $S \subseteq E$, then, $\alg u_e = \frac{x_e}{\alg w_e}$ is an optimal primal solution to \eqref{eq:SUA} and $\alg \alpha_S$ is an optimal dual solution.
\end{restatable}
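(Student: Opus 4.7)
The plan is to prove the theorem in two stages: first establish the strict decrease $t_1 > t_2 > \cdots > t_L \geq 0$, and then verify that $(\alg u, \alg \alpha)$ jointly satisfy the KKT conditions of \eqref{eq:SUA}. Because the objective is concave in $u$ and the constraints are linear, KKT conditions together with primal and dual feasibility suffice for joint optimality of the primal-dual pair.

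For the strict decrease, I would argue by contradiction. First, note that the nesting $S_{\ell-1}\subseteq S_\ell$ holds without loss of generality at each iteration, since replacing any argmax $S$ by $S \cup S_{\ell-1}$ leaves both the numerator $\vec x(S\setminus S_{\ell-1})$ and the denominator $f_{S_{\ell-1}}(S)$ of the density ratio unchanged; the maximal argmax therefore contains $S_{\ell-1}$. Now suppose for contradiction that $t_{\ell+1} \geq t_\ell$. Decomposing disjointly along the chain $S_{\ell-1}\subseteq S_\ell\subseteq S_{\ell+1}$, the density of $S_{\ell+1}$ relative to $S_{\ell-1}$ is a convex combination:
\[
\frac{\vec x(S_{\ell+1}\setminus S_{\ell-1})}{f_{S_{\ell-1}}(S_{\ell+1})} \;=\; \frac{t_\ell\, f_{S_{\ell-1}}(S_\ell) + t_{\ell+1}\, f_{S_\ell}(S_{\ell+1})}{f_{S_{\ell-1}}(S_\ell) + f_{S_\ell}(S_{\ell+1})} \;\geq\; t_\ell,
\]
so $S_{\ell+1} \supsetneq S_\ell$ is also an argmax at iteration $\ell$, contradicting the maximality of $S_\ell$. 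That $t_L \geq 0$ is immediate from $\vec x \geq 0$.

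For the KKT verification, let $\ell(e)$ denote the level of $e$, so $\alg w_e = t_{\ell(e)}$ and $\alg u_e = x_e/t_{\ell(e)}$ (interpreted as $0$ when $x_e = 0$). Dual feasibility $\alg \alpha_{S_\ell} = t_\ell - t_{\ell+1} > 0$ follows from the first stage (with the convention $t_{L+1}:=0$). Stationarity reduces to a telescoping identity $\sum_{S\ni e}\alg \alpha_S = \sum_{k\geq \ell(e)}(t_k - t_{k+1}) = t_{\ell(e)}$, which matches $m_e/\alg u_e = t_{\ell(e)}$ whenever $\alg u_e > 0$. Complementary slackness is automatic except at sets $S_\ell$; for these a similar telescoping gives tightness:
\[
\sum_{e \in S_\ell}\alg u_e \;=\; \sum_{k=1}^\ell \frac{\vec x(S_k\setminus S_{k-1})}{t_k} \;=\; \sum_{k=1}^\ell f_{S_{k-1}}(S_k) \;=\; f(S_\ell).
\]

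The main technical step, and where I expect the main obstacle, is primal feasibility $\sum_{e \in S}\alg u_e \leq f(S)$ for \emph{every} $S\subseteq E$, not just the $S_\ell$. I would proceed by induction on $L$. The base case $L=1$ has $S_1=E$ and reduces to $\vec x(S)/t_1 \leq f(S)$, which holds because $t_1$ is the max density. For the inductive step, partition $S = S' \sqcup T$ with $S' = S\cap S_1$ and $T = S\setminus S_1$. The contribution from $S'$ is $\vec x(S')/t_1 \leq f(S')$ by maximality of $t_1$. The algorithm's iterations $2,\ldots,L$ correspond exactly to running on the contracted polymatroid over $E\setminus S_1$ with submodular function $f_{S_1}$, so the inductive hypothesis gives $\sum_{e\in T}\alg u_e \leq f_{S_1}(T) = f(S\cup S_1) - f(S_1)$. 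Summing and applying the submodular uncrossing inequality $f(S) + f(S_1) \geq f(S\cap S_1) + f(S\cup S_1) = f(S') + f(S\cup S_1)$ closes the argument. The delicate point is the interplay between the greedy maximality at each algorithmic step and the submodularity of $f$, which come together precisely through this contract-and-uncross induction.
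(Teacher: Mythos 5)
Your proof is correct, but it takes a genuinely different route from the paper's. The paper first recasts \Cref{alg:comb-dec} as a continuous ``freezing'' process (\Cref{alg:alt-comb-dec}), in which every element's allocation is scaled up by a common factor $t$ until some set becomes tight, whereupon that set is frozen; primal feasibility of $\alg u$ then follows because the scaled vector stays in the polymatroid throughout, and $t_1 > \cdots > t_L$ follows because the scaling parameter only increases. You instead work directly with the greedy argmax definition: you get strict monotonicity of the densities from a mediant (convex-combination) decomposition along $S_{\ell-1}\subseteq S_\ell\subseteq S_{\ell+1}$ contradicting the maximality of $S_\ell$, and you get primal feasibility for arbitrary $S$ by induction on $L$, splitting $S$ into $S\cap S_1$ and $S\setminus S_1$, applying the inductive hypothesis to the contracted function $f_{S_1}$, and closing with the submodular uncrossing inequality $f(S)+f(S_1)\ge f(S\cap S_1)+f(S\cup S_1)$. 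Both proofs then verify the same KKT conditions (stationarity and complementary slackness via the same telescoping identities, with the same convention for $x_e=0$). What the paper's route buys is the water-filling picture and brevity, at the cost of leaving the maintained-feasibility claim essentially asserted; your contract-and-uncross induction makes that step fully rigorous and pinpoints exactly where submodularity is used, and your mediant argument for monotonicity mirrors the density-decomposition technique the paper itself deploys in the proof of \Cref{lem:wl-alg-equiv}. The only minor caveat, shared with the paper, is the implicit non-degeneracy assumption that the contracted marginals $f_{S_{\ell-1}}(S_\ell)$ appearing in the denominators are positive.
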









\subsection{Key Properties of Water Levels}
Armed with the  characterizations of water levels, we show  they satisfy the desired properties.

\begin{proposition}[Monotone and Continuous]\label{cor:wl-mono}
    The vector $\vec w\up x$ is coordinate-wise non-decreasing in $\vec x$. Furthermore, $\vec w\up x$ is continuous with respect to $\vec x$. 
\end{proposition}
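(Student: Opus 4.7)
The plan is to work directly from the max-min characterization in \Cref{def:water-levels}. The key observation is that for each fixed pair $(S,T)$ with $S \ni e$ and $f_T(\{e\}) \neq 0$, the expression
\[
g_{S,T}(\vec x) := \frac{\vec x(S \setminus T)}{f_T(S)}
\]
is a linear function of $\vec x$ with non-negative coefficients. Indeed, the denominator $f_T(S)$ does not depend on $\vec x$, and it is strictly positive because $e \in S$ together with monotonicity of $f_T$ gives $f_T(S) \geq f_T(\{e\}) > 0$. Hence each of these ``basic ratios'' is already monotone and continuous in $\vec x$, and the whole proof reduces to arguing that $\min$ and $\max$ preserve these two properties.

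For monotonicity, I would observe that if $\vec x \leq \vec x'$ coordinate-wise, then $\vec x(S \setminus T) \leq \vec x'(S \setminus T)$ for every valid $(S,T)$, so $g_{S,T}(\vec x) \leq g_{S,T}(\vec x')$. Since both $\min$ and $\max$ preserve pointwise inequalities between families of functions, applying $\min_T$ followed by $\max_S$ yields $w_e\up x \leq w_e\up{x'}$, as desired.

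For continuity, each $g_{S,T}$ is linear and therefore continuous in $\vec x$. Because $E$ is finite, there are only finitely many sets $S \ni e$ and only finitely many $T \subseteq E$ with $f_T(\{e\}) \neq 0$, so $w_e\up x$ is expressed as a \emph{finite} max-min of continuous functions, and is therefore itself continuous in $\vec x$.

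There is no substantial obstacle here; the only things to check carefully are well-definedness of the denominators and non-emptiness of the index set in the min. Both are handled by the running assumption $f(\{e\}) > 0$ for every $e \in E$: this guarantees that $T = \varnothing$ is always a valid choice in the min (so the min is taken over a non-empty finite set), and the monotonicity of $f_T$ propagates positivity of $f_T(\{e\})$ to positivity of $f_T(S)$ whenever $e \in S$, preventing any singular denominators.
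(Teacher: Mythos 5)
Your proposal is correct and follows exactly the paper's argument: the paper likewise observes that $w_e\up x$ is a maximum of minimums of monotone increasing linear functions of $\vec x$ (via \Cref{def:water-levels}), from which both properties follow. Your write-up simply fills in the details (positivity of the denominators, finiteness of the index sets) that the paper leaves implicit.
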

\begin{proof}
    Both properties follow immediately from \Cref{def:water-levels}, since the $w\up x_e$ is a maximum of minimums over monotone increasing linear functions on $\vec x$.
\end{proof}



\begin{proposition}[Indication of Feasibility]
\label{cor:wl-feasibility}
     $\vec w\up x \leq 1$ if and only if $\vec{x}(S) \leq f(S)$ for all $S \subseteq E$.
\end{proposition}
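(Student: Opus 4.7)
The plan is to deduce both directions from the combinatorial description of $\vec w\up x$ given by \Cref{alg:comb-dec}, which by \Cref{lem:wl-alg-equiv} computes the same vector as \Cref{def:water-levels}. Let the algorithm produce the nested chain $\varnothing = S_0 \subsetneq S_1 \subsetneq \cdots \subsetneq S_L = E$ with densities $t_1 > \cdots > t_L \geq 0$, so that $w\up x_e = t_{\ell+1}$ for every $e \in S_{\ell+1} \setminus S_\ell$.

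The ``if'' direction is essentially immediate from \Cref{def:water-levels}. Because $f(\{e\}) > 0$ on the ground set, the set $T = \varnothing$ is feasible in the inner minimization, and by monotonicity $f(S) \geq f(\{e\}) > 0$ for every $S \ni e$. Under the feasibility hypothesis, for any $e$ and any $S \ni e$,
\[
\min_{T \,:\, f_T(\{e\}) \neq 0} \frac{\vec x(S \setminus T)}{f_T(S)} \;\leq\; \frac{\vec x(S)}{f(S)} \;\leq\; 1,
\]
and taking the outer maximum over $S \ni e$ gives $w\up x_e \leq 1$.

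For the ``only if'' direction, I would assume $\vec w\up x \leq 1$, so $t_{\ell+1} \leq 1$ for every $\ell$. Fix an arbitrary $S \subseteq E$ and partition it as $S = \bigsqcup_{\ell=0}^{L-1} A_\ell$ with $A_\ell := S \cap (S_{\ell+1} \setminus S_\ell)$. Line 4 of \Cref{alg:comb-dec} selects $S_{\ell+1}$ as the (maximum) maximizer of $\vec x(T \setminus S_\ell)/f_{S_\ell}(T)$ among $T$ with $T \setminus S_\ell \neq \varnothing$, so applying this maximality at $T = S \cap S_{\ell+1}$ (for which $T \setminus S_\ell = A_\ell$) and using $t_{\ell+1} \leq 1$ yields
\[
\vec x(A_\ell) \;\leq\; t_{\ell+1}\bigl[f(S_\ell \cup (S \cap S_{\ell+1})) - f(S_\ell)\bigr] \;\leq\; f(S_\ell \cup (S \cap S_{\ell+1})) - f(S_\ell).
\]
To finish I would invoke the diminishing-marginals form of submodularity with $X := S \cap S_\ell \subseteq Y := S_\ell$ and $Z := S \cap S_{\ell+1}$: this gives $f(Y \cup Z) - f(Y) \leq f(X \cup Z) - f(X)$, i.e. $f(S_\ell \cup (S \cap S_{\ell+1})) - f(S_\ell) \leq f(S \cap S_{\ell+1}) - f(S \cap S_\ell)$. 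Summing over $\ell$ telescopes the right-hand side to $f(S \cap S_L) - f(S \cap S_0) = f(S)$, so $\vec x(S) = \sum_\ell \vec x(A_\ell) \leq f(S)$, as required.

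The key technical point is the forward direction: the naive choice $T = S$ in the maximality bound produces terms of the form $f(S \cup S_\ell) - f(S_\ell)$, which do not telescope cleanly. The trick is to intersect $S$ with $S_{\ell+1}$ before applying maximality, so that the submodular diminishing-marginals inequality collapses the sum exactly to $f(S)$.
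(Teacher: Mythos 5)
Your proof is correct, but it reverses the relative difficulty of the two directions compared to the paper's two-line argument. For ``$\vec w\up x \leq 1$ implies feasibility,'' the paper simply observes that $t_1 = \max_{S \subseteq E} \vec{x}(S)/f(S)$ is by construction the \emph{global} maximum density, so $t_1 \leq 1$ already certifies $\vec{x}(S) \leq f(S)$ for every $S$; your layer-by-layer telescoping is valid --- the maximality bound applied at $T = S \cap S_{\ell+1}$ and the diminishing-marginals inequality with $X = S \cap S_\ell \subseteq Y = S_\ell$ and $Z = S \cap S_{\ell+1}$ both check out, and the right-hand side does collapse to $f(S)$ --- but it reconstructs from the whole chain what is already visible in the first step of \Cref{alg:comb-dec}. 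For the converse direction, your argument is actually leaner than the paper's: plugging $T = \varnothing$ into the inner minimum of \Cref{def:water-levels} (legitimate since $f(\{e\}) > 0$) gives $w\up x_e \leq \max_{S \ni e} \vec{x}(S)/f(S) \leq 1$ directly from the definition, with no appeal to \Cref{lem:wl-alg-equiv} or to the monotonicity $t_1 > \cdots > t_L$ from \Cref{thm:comb-dec-and-SUA}, both of which the paper's proof of this direction relies on. So the net trade-off is: your version is self-contained on one side and needlessly heavy on the other; combining your $T = \varnothing$ argument with the paper's $t_1$ observation would give the shortest proof.
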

\begin{proof}
    This follows from \Cref{alg:comb-dec}. If $\vec w\up x \leq 1$, then in particular $t_1 = \max_{S \subseteq E}\frac{\vec{x}(S)}{f(S)} \leq 1$, which means $\vec{x}(S) \leq f(S)$ for all $S \subseteq E$. Conversely, if $\vec{x}(S) \leq f(S)$ for all $S \subseteq E$ then clearly $t_1 \leq 1$. Moreover, the densities $t_\ell$ are decreasing by \Cref{thm:comb-dec-and-SUA}, implying $t_\ell \leq 1$ for all $\ell$. Thus $w\up x \leq 1$. 
\end{proof}

\begin{proposition}[Duality]\label{lem:wl-duality}
    For any $\vec x \in \Rp^E$, we have
    $
    L_f(\vec w\up x) = \sum_{e \in E} x_e.
    $
\end{proposition}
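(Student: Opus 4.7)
The plan is to leverage the explicit combinatorial structure of the water level vector from \Cref{alg:comb-dec} together with the monotonicity of the densities guaranteed by \Cref{thm:comb-dec-and-SUA}. By those results, the vector $\vec w\up x$ has a chain decomposition: there exist nested sets $\varnothing = S_0 \subsetneq S_1 \subsetneq \cdots \subsetneq S_L = E$ and strictly decreasing densities $t_1 > t_2 > \cdots > t_L \geq 0$ such that $w\up x_e = t_\ell$ precisely for $e \in S_\ell \setminus S_{\ell-1}$. This gives us complete control over the super-level sets of $\vec w\up x$.

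Next, I would compute the Lov\'asz extension directly from its integral definition. Setting $t_{L+1} := 0$, the super-level set $\{e \in E : w_e \geq t\}$ is exactly $S_\ell$ for $t \in (t_{\ell+1}, t_\ell]$ and is empty for $t > t_1$. Hence
\[
L_f(\vec w\up x) = \int_0^\infty f(\{e : w_e \geq t\})\,dt = \sum_{\ell = 1}^L (t_\ell - t_{\ell+1})\, f(S_\ell).
\]

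On the other side, I would rewrite $\sum_e x_e$ using the same chain. Since the $S_\ell \setminus S_{\ell-1}$ partition $E$, and since by construction $t_\ell = \vec x(S_\ell \setminus S_{\ell-1})/(f(S_\ell) - f(S_{\ell-1}))$, we get
\[
\sum_{e \in E} x_e = \sum_{\ell=1}^L \vec x(S_\ell \setminus S_{\ell-1}) = \sum_{\ell=1}^L t_\ell \bigl(f(S_\ell) - f(S_{\ell-1})\bigr).
\]
An Abel summation (shifting the index in the second term and using $f(S_0) = f(\varnothing) = 0$) rearranges this to $\sum_{\ell=1}^L (t_\ell - t_{\ell+1}) f(S_\ell)$, matching the expression above for $L_f(\vec w\up x)$.

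There is no real obstacle here: once the chain decomposition from \Cref{alg:comb-dec} and the density formula are in hand, both sides reduce to the same sum, so the main work was already done in establishing \Cref{lem:wl-alg-equiv} and \Cref{thm:comb-dec-and-SUA}. As an aside, the same identity can be derived from the KKT conditions of \eqref{eq:SUA}: stationarity gives $x_e = \alg u_e \sum_{S \ni e} \alg \alpha_S = \alg u_e\, w_e$, and complementary slackness together with the nested support of $\alg{\vec \alpha}$ gives $\sum_e x_e = \sum_S \alg \alpha_S f(S) = L_f(\vec w\up x)$ via \eqref{eq:lovasz}.
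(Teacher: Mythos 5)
Your proof is correct, and it takes a genuinely different route from the paper's. The paper proves duality by passing through the SUA convex program: it takes the optimal primal/dual pair $\alg u_e = x_e/w_e$, $\alg\alpha_{S_\ell}$ certified by \Cref{thm:comb-dec-and-SUA}, applies complementary slackness $\alpha_S \sum_{e\in S} u_e = \alpha_S f(S)$, and swaps the order of summation to land on $\sum_e x_e$ — essentially the argument you relegate to your closing aside. You instead work purely with the chain $S_0 \subsetneq \cdots \subsetneq S_L$ and the density formula $t_\ell = \vec x(S_\ell\setminus S_{\ell-1})/(f(S_\ell)-f(S_{\ell-1}))$: both $L_f(\vec w\up x)$ (via its level-set integral) and $\sum_e x_e$ (via the partition of $E$) telescope to $\sum_\ell (t_\ell - t_{\ell+1})f(S_\ell)$ after Abel summation. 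Your route is more elementary and self-contained — it needs only the combinatorial description of \Cref{alg:comb-dec}, the identification $\alg{\vec w} = \vec w\up x$ from \Cref{lem:wl-alg-equiv}, and the monotonicity $t_1 > \cdots > t_L \geq 0$ to pin down the super-level sets, but not the optimality of the SUA primal/dual pair. The paper's route buys a tighter link to the market-equilibrium viewpoint and reuses the KKT machinery it has already set up; yours makes the identity visibly a statement about the principal partition alone. Both are valid; the only point worth being careful about in your write-up is to note explicitly that $f_{S_{\ell-1}}(S_\ell) = f(S_\ell) - f(S_{\ell-1})$ because the sets are nested, which is what justifies your form of the density.
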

\begin{proof}
For this, we refer to the convex program formulation of water levels from \Cref{thm:comb-dec-and-SUA}. Taking the optimal primal/dual pair $u_e, \alpha_S$ from \Cref{thm:comb-dec-and-SUA}, the complementary slackness conditions of \eqref{eq:SUA} give $\alpha_S \sum_{e \in S} u_e = \alpha_S f(S)$ for each $S \subseteq E$. Summing over those $S$ with $\alpha_S > 0$,
we get 
\[
    \sum_{\ell = 1}^L \alpha_{S_\ell} f(S_{\ell}) 
    \stackrel{(a)}{=} \sum_{\ell = 1}^L \alpha_{S_\ell} \sum_{e \in S_\ell} u_e
    \stackrel{(b)}{=} \sum_{\ell = 1}^L \alpha_{S_\ell} \sum_{e \in S_\ell} \frac{x_e}{w_e} = \sum_{e \in E} \left({\sum_{\ell: S_\ell \ni e} \alpha_{S_\ell}}\right) \cdot \frac{x_e}{w_e} \stackrel{(c)}{=} \sum_{e \in E} x_e.
\]
Here, $(a)$ is using  $\alpha_S \sum_{e \in S} u_e = \alpha_S f(S)$ for each $S \subseteq E$, $(b)$ is because $u_e = \frac{x_e}{w_e}$ by \Cref{thm:comb-dec-and-SUA}, and $(c)$ is because $\sum_{\ell: S_\ell \ni e} \alpha_{S_\ell} = w_e$, using the fact that the sets $S_\ell$ are nested and the definition of $\alpha_{S_\ell}$ in \Cref{thm:comb-dec-and-SUA}.
Finally, the LHS is equal to $L_f(w\up x)$ by \eqref{eq:lovasz} and using $\sum_{\ell: S_\ell \ni e} \alpha_{S_\ell} = w_e$.
\end{proof}

\begin{proposition}[Locality]\label{lem:wl-change}
    For $\vec x \in \Rp^E$ and any $e_1, e_2 \in E$ with $w\up x_{e_1} \neq w\up x_{e_2}$, we have $\frac{\partial w\up x_{e_1}}{\partial x_{e_2}} = 0$.
\end{proposition}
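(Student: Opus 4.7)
The plan is to show that $w\up{x'}_{e_1} = w\up x_{e_1}$ for every $\vec x' = \vec x + \delta \cdot \one_{e_2}$ with $|\delta|$ sufficiently small, which immediately implies the partial derivative vanishes at $\vec x$. The main tool is the saddle-point characterization provided by \Cref{lem:wl-alg-equiv}: running \Cref{alg:comb-dec} on $\vec x$ yields a nested family $\varnothing = S_0 \subsetneq S_1 \subsetneq \cdots \subsetneq S_L = E$, and the pair $(S^*, T^*) := (S_{\ell_1}, S_{\ell_1 - 1})$ is a saddle for $e_1$ in the max-min expression of \Cref{def:water-levels}, where $\ell_1$ is the unique index with $e_1 \in S_{\ell_1} \setminus S_{\ell_1 - 1}$. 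Since $w\up x_{e_1} \neq w\up x_{e_2}$, the element $e_2$ must lie at a strictly different level $\ell_2 \neq \ell_1$ of the nested family, producing two cases: (A) $\ell_1 < \ell_2$, in which $e_2 \notin S^*$, and (B) $\ell_1 > \ell_2$, in which $e_2 \in S_{\ell_2} \subseteq T^*$.

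In Case (A), because $e_2 \notin S^*$ we have $\vec x'(S^* \setminus T) = \vec x(S^* \setminus T)$ for every $T$; plugging $S = S^*$ into the max-min expression for $w\up{x'}_{e_1}$ and applying the saddle property at $\vec x$ then yields
\[
    w\up{x'}_{e_1} \;\geq\; \min_{T} \frac{\vec x'(S^* \setminus T)}{f_T(S^*)} \;=\; \min_{T} \frac{\vec x(S^* \setminus T)}{f_T(S^*)} \;=\; w\up x_{e_1}.
\]
In Case (B), the symmetric calculation using $e_2 \in T^*$ and plugging $T = T^*$ into the min-max expression for $w\up{x'}_{e_1}$ yields $w\up{x'}_{e_1} \leq w\up x_{e_1}$. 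So in either case the saddle structure at $\vec x$ immediately gives a one-sided inequality for any $\delta$.

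The main obstacle is upgrading this one-sided bound to equality, since we need both directions and for both signs of $\delta$. The key is to invoke continuity of $\vec w\up{\cdot}$ in $\vec x$ (\Cref{cor:wl-mono}): because $w\up x_{e_1} \neq w\up x_{e_2}$ strictly, for every sufficiently small $|\delta|$ the same strict ordering persists at $\vec x'$, so \Cref{alg:comb-dec} applied at $\vec x'$ produces a saddle pair $(S'^*, T'^*)$ for $e_1$ with $e_2$ situated analogously — either $e_2 \notin S'^*$ (Case A) or $e_2 \in T'^*$ (Case B). Rerunning the previous argument with the roles of $\vec x$ and $\vec x'$ swapped, i.e.\ viewing $\vec x$ as the perturbation $\vec x' + (-\delta)\one_{e_2}$ of $\vec x'$, produces the reverse inequality: $w\up x_{e_1} \geq w\up{x'}_{e_1}$ in Case (A) and $w\up x_{e_1} \leq w\up{x'}_{e_1}$ in Case (B). Combining shows $w\up{x'}_{e_1} = w\up x_{e_1}$ for all sufficiently small $|\delta|$, completing the proof.
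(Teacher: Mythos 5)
Your proof is correct, and it takes a genuinely different route from the paper's. The paper argues directly on the execution of \Cref{alg:comb-dec}: using continuity it fixes the level sets just above and below $e_2$'s level, then observes that the runs of the algorithm on $\vec x$ and on the perturbed input coincide on every step before $e_2$'s level is reached and again on every step after it has been passed, so all elements at a different level keep their water levels. You instead use the variational characterization: the saddle pair $(S^*,T^*)$ from \Cref{lem:wl-alg-equiv} is a certificate that is insensitive to $x_{e_2}$ --- $e_2\notin S^*$ when $w\up{x}_{e_1}>w\up{x}_{e_2}$ and $e_2\in T^*$ when $w\up{x}_{e_1}<w\up{x}_{e_2}$ --- which gives one inequality for free for arbitrary $\delta$, and continuity (\Cref{cor:wl-mono}) is invoked only to transport the case distinction to $\vec x'$ so that the symmetric argument yields the reverse inequality. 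Your version isolates the minimax duality as the single working ingredient and avoids tracking the algorithm's trajectory, at the cost of controlling only $w_{e_1}$ itself rather than showing, as the paper's argument does, that the entire prefix and suffix of the principal-partition decomposition are unchanged. One point worth making explicit in Case (B): for $T^*=S_{\ell_1-1}$ to be an admissible choice in the minimization you need $f_{T^*}(\{e_1\})\neq 0$; this is part of what \Cref{lem:wl-alg-equiv} asserts when it declares $(S_{\ell_1},S_{\ell_1-1})$ a saddle point for $e_1$, so the step is sound, but it deserves a sentence.
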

\begin{proof}
This follows from the continuity of $\vec w\up x$ with respect to $\vec x$ (implied by \Cref{def:water-levels}) and the algorithmic definition given by \Cref{alg:comb-dec}. Let $w\up x_{e_2} = t_\ell$, for some step $\ell$, where $t_1, \dots, t_L$ are the values from \Cref{alg:comb-dec} on input $\vec x$, and let $\eps := \frac{\min(t_{\ell-1} - t_\ell, t_\ell - t_{\ell + 1})}{2}$. By continuity, we may choose $\delta_\eps > 0$ small enough so that for any $\delta \in (-\delta_\eps, \delta_\eps)$ and $\vec y := \vec x + \delta \one_{e_2}$, we have
\[
    w_e\up{y} \in (w_e\up{x} - \eps, w_e\up{x} + \eps)
\]
for all $e \in E$. It suffices to show for any such $\vec y$ that $w\up {y}_{e_1} = w\up x_{e_1}$. Consider for each $t$ the sets $E\up{x}_{\geq t} := \{e : w_e\up{x} \geq t\}$ and $E\up{y}_{\geq t} := \{e : w_e\up{y} \geq t\}$. Then by our choice of $\eps$, we have 
\begin{align*}
    &E_+ := E\up{x}_{\geq t_\ell + \eps} = E\up{y}_{\geq t_\ell + \eps}, \text{ and}\\
    &E_- := E\up{x}_{\geq t_\ell - \eps} = E\up{y}_{\geq t_\ell - \eps}.
\end{align*}
Note that $e_2 \in E_- \setminus E_+$. 

It is clear then that the first $\ell - 1$ steps of the \Cref{alg:comb-dec} on $\vec x$ and on $\vec y$ are identical, with $S_{\ell-1} = E_{+}$ in both cases. This follows because $\vec x$ and $\vec y$ differ only in their value at $e_2$, and $w_{e_2}\up{y} < w_{e_2}\up{x} + \eps = t_\ell + \eps$, so the algorithm assigns water levels to all elements $e \in E_{+}$ before assigning a water level to $e_2$. Hence, if $e_1$ has $w\up x_{e_1} > w\up x_{e_2}$, we have $w\up {y}_{e_1} = w\up x_{e_1}$ as desired. 

In the other case that $e_1$ has $w\up{x}_{e_1} < w\up x_{e_2}$, observe that $e_1 \in E \setminus E_-$. Moreover, for both inputs $\vec x$ and $\vec y$, \Cref{alg:comb-dec} will have some step $\ell' \geq \ell$ ($\ell'$ may differ for $\vec x$ and $\vec y$) at which $S_{\ell'} = E_{-}$. Since $e_2 \in E_-$, and $\vec x$ and $\vec y$ differ only at $e_2$, then every future step of the algorithm is identical for the two inputs. In particular, $w\up {y}_{e_1} = w\up x_{e_1}$.
\end{proof}

In addition, a key consequence of the duality and locality properties is the following ``chain-rule'' lemma about the partial derivatives of $L_f(G(\vec w\up x))$ with respect to entries of $\vec x$. Such a lemma will be useful in the primal-dual analysis of online SAP.
\begin{lemma}[Water Level Chain Rule]\label{lem:L-G-derivative}
    If $G : \Rp \to \Rp$ is a non-decreasing differentiable function with continuous derivative $G' = g$, then for all $\vec x \in \Rp^E$ and $e \in E$,
    \[
    \frac{\partial (L_f(G(\vec w\up x)))}{\partial x_e} = g(w\up x_e),
    \]
    where $G(\vec{w}\up x) := (G(w\up x_e))_{e \in E}$.
\end{lemma}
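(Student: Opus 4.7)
The plan is to leverage the level-set decomposition of $L_f$ furnished by \Cref{alg:comb-dec} and then compute the directional derivative along $\one_e$ directly. Let $\vec w := \vec w\up x$, let $t_1 > t_2 > \cdots > t_L \geq 0$ be its distinct water levels, and let $\varnothing = S_0 \subset S_1 \subset \cdots \subset S_L = E$ be the associated nested sets from \Cref{alg:comb-dec}. Since $G$ is non-decreasing, the level sets of $G(\vec w)$ coincide with those of $\vec w$, with values $G(t_\ell)$, so \Cref{def:wl} yields the telescoping identity
\[
    L_f(G(\vec w)) \;=\; \sum_{\ell=1}^L \bigl(G(t_\ell) - G(t_{\ell+1})\bigr)\, f(S_\ell),
\]
where $t_{L+1} := 0$.

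Fix $e \in E$, set $s := w_e\up x = t_{\ell^*}$, $S := S_{\ell^*-1}$, $A := S_{\ell^*} \setminus S$, and consider the perturbation $\vec x + \delta \one_e$ with small $\delta > 0$. By \Cref{lem:wl-change} and \Cref{cor:wl-mono}, the water levels of all elements outside $A$ are unchanged, while those of elements in $A$ stay within an arbitrarily small window around $s$ and remain at least $s$ by monotonicity. A short argument based on \Cref{alg:comb-dec} shows that, when run on $\vec x + \delta \one_e$, the algorithm produces $S_0, \ldots, S_{\ell^*-1}$ exactly as before and then subdivides $A$ into a nested chain $S = C_0 \subset C_1 \subset \cdots \subset C_r = S_{\ell^*}$, with new levels $s \leq s_r' < \cdots < s_1'$: any subset $R$ selected by the algorithm from $f_S$ (or its subsequent contractions) has density equal to the new water level of its elements, so $R$ cannot contain any element outside $A$, since those have unchanged water level strictly less than $s$.

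Applying the telescoping identity to both $\vec x$ and $\vec x + \delta \one_e$ and cancelling the terms that are unchanged by Locality, a direct Abel summation collapses the difference to
\[
    L_f(G(\vec w\up{x+\delta\one_e})) - L_f(G(\vec w\up x)) \;=\; \sum_{j=1}^r \bigl(G(s_j') - G(s)\bigr)\bigl(f(C_j) - f(C_{j-1})\bigr).
\]
The key algebraic identity
\[
    \sum_{j=1}^r (s_j' - s)\bigl(f(C_j) - f(C_{j-1})\bigr) \;=\; \delta
\]
follows from substituting the density formula $s_j' = \bigl(\vec x(C_j \setminus C_{j-1}) + \delta \one[e \in C_j \setminus C_{j-1}]\bigr) / \bigl(f(C_j) - f(C_{j-1})\bigr)$ and using the original density equation $s\bigl(f(S_{\ell^*}) - f(S)\bigr) = \vec x(A)$: the sum of numerators telescopes to $\vec x(A) + \delta$, while the contributions from $s$ telescope to $\vec x(A)$. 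The mean value theorem now gives $G(s_j') - G(s) = g(\xi_j)(s_j' - s)$ for some $\xi_j \in [s, s_j']$, and since all $s_j' \to s$ as $\delta \to 0^+$, continuity of $g$ forces $g(\xi_j) \to g(s)$ uniformly in $j$. Combining these,
\[
    \lim_{\delta \to 0^+} \frac{L_f(G(\vec w\up{x+\delta\one_e})) - L_f(G(\vec w\up x))}{\delta} \;=\; g(s) \;=\; g(w_e\up x).
\]
A symmetric analysis (now with the new levels in $A$ lying in $[s', s]$ for some $s' \leq s$) handles $\delta \to 0^-$.

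The hard part is the structural middle step: carefully justifying, via a combination of the strong locality established in the proof of \Cref{lem:wl-change} and the internal behavior of \Cref{alg:comb-dec}, that after a small perturbation at $e \in A$ the level set $A$ can only subdivide into subsets of itself (with all new levels at least $s$) rather than merging with elements of strictly lower water level. Once this is in hand, everything else is routine bookkeeping: the Abel telescope, one application of the mean value theorem, and the polymatroid density identity.
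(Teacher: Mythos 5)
Your proof is correct, but it takes a genuinely different route from the paper's. The paper works directly with the integral representation $L_f(G(\vec w)) = \int_0^\infty f(\{e' : G(w_{e'}) \geq t\})\,dt$: it changes variables $t = G(u)$ to pull out a factor of $g(u)$, uses locality (\Cref{lem:wl-change}) only to restrict the integral to $u \in [w\up{x}_e, w\up{y}_e]$ for $\vec y = \vec x + \delta\one_e$, extracts $g(\hat u)$ by the intermediate value theorem, and then invokes the duality property $L_f(\vec w\up{x}) = \sum_{e'} x_{e'}$ (\Cref{lem:wl-duality}) as a black box to evaluate the remaining integral as exactly $\delta$. You instead discretize: you expand $L_f(G(\vec w\up{x}))$ over the principal-partition chain produced by \Cref{alg:comb-dec}, track combinatorially how that chain refines under the perturbation, and re-derive the needed local instance of duality by hand from the density formulas --- your identity $\sum_j (s_j' - s)\bigl(f(C_j) - f(C_{j-1})\bigr) = \delta$ is precisely the restriction of \Cref{lem:wl-duality} to the affected level set. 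What your route buys is a concrete picture of exactly which level sets move and independence from \Cref{lem:wl-duality} as a separate input; what it costs is the structural middle step you rightly flag as the hard part (that the perturbed chain only subdivides $A$, with all new levels at least $s$), which the paper sidesteps entirely because the change of variables needs only the coarser consequence of locality that the level-set symmetric differences are confined to $[w\up{x}_e, w\up{y}_e]$. That structural step does go through --- it is essentially the argument inside the paper's proof of \Cref{lem:wl-change} combined with \Cref{lem:wl-alg-equiv} --- and your Abel summation, the squeeze via $\min_j g(\xi_j)\,\delta \leq \Delta \leq \max_j g(\xi_j)\,\delta$, and your treatment of both one-sided limits all check out (the paper only treats $\delta > 0$ explicitly). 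One cosmetic point: with $t_{L+1} := 0$ your telescoping identity is off by the constant $G(0)\,f(E)$ when $G(0) > 0$ (the correct last term is $G(t_L)f(S_L)$, not $(G(t_L)-G(0))f(S_L)$); this is harmless since the constant cancels in the difference quotient.
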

\begin{proof}
    For given $\vec x \in \Rp^E$ and $e \in E$, let $\vec y = \vec x + \varepsilon \cdot \one_{e}$. Then expanding $L_f$ as an integral, applying a change of variables, and invoking the monotonicity of $G$, we have
    \begin{align*}
        L_f(G(\vec{w}\up y)) - L_f(G(\vec{w}\up x)) &= \int_0^\infty  \Big(f(\{e' : G(w\up y_{e'}) \geq t\}) - f(\{e' : G(w\up x_{e'}) \geq t\})\Big) dt \\
        &= \int_0^\infty  \Big(f(\{e' : G(w\up y_{e'}) \geq G(u)\}) - f(\{e' : G(w\up x_{e'}) \geq G(u)\})\Big) g(u) du \\
        &=\int_0^\infty  \Big(f(\{e' : w\up y_{e'} \geq u\}) - f(\{e' : w\up x_{e'} \geq u\})\Big) g(u) du.
    \end{align*}
Now, we crucially use locality (\Cref{lem:wl-change}) to claim that $\{e': w\up y_{e'} \geq u\} = \{e': w\up x_{e'} \geq u\}$ for any $u \not \in [w\up x_{e}, w\up y_{e}]$. This follows because for each $e' \in E$ and every $\vec z = \vec x + \delta \cdot \one_e$ for $\delta \in [0, \eps]$, we have $\frac{\partial w_{e'}\up{z}}{\partial z_e} = 0$ unless $w_{e'}\up{z} = w_e\up{z}$. In particular, if $w_{e'}\up{x}$ lies outside the range $[w\up x_{e}, w\up y_{e}]$, then the water level of $e'$ never changes as we move $\vec z$ from $\vec x$ to $\vec y$. Likewise, the water level of any $e'$ whose water level lies within the range $[w\up x_{e}, w\up y_{e}]$ may change, but it cannot increase beyond $w\up y_{e}$. So for each $e' \in E$, either $w\up y_{e'} = w\up x_{e'} \not \in [w\up x_{e}, w\up y_{e}]$ or $w\up x_{e'}, w\up y_{e'} \in [w\up x_{e}, w\up y_{e}]$.


Using this fact, we may restrict the above integral to the range $u \in [w\up x_e, w\up y_e]$. Together with the intermediate value theorem, we have that there exists some $\hat u \in [w\up x_e, w\up y_e]$ such that
\begin{align*}
    L_f(G(\vec{w}\up y)) - L_f(G(\vec{w}\up x)) &= \int_{w\up x_e}^{w\up y_e}  \Big(f(\{e' : w\up y_{e'} \geq u\}) - f(\{e' : w\up x_{e'} \geq u\})\Big) g(u) du \\
    &= g(\hat u) \cdot \int_{w\up x_e}^{w\up y_e}  \Big(f(\{e' : w\up y_{e'} \geq u\}) - f(\{e' : w\up x_{e'} \geq u\})\Big) du\\
    &=g(\hat u) \cdot (L_f(\vec{w}\up y) - L_f(\vec{w}\up x)).
\end{align*}
From duality (\Cref{lem:wl-duality}), we know that $L_f(\vec{w}\up y) - L_f(\vec{w}\up x) = \sum_{e' \in E} (y_{e'} - x_{e'}) = \epsilon$. Therefore, we ultimately get
\[
\frac{L_f(G(\vec{w}\up y)) - L_f(G(\vec{w}\up x))}{\epsilon} = g(\hat u).
\]
Taking the limit as $\epsilon \to 0$, we have $\hat u \to w\up x_e$ since $w\up x_e \leq \hat u \leq w\up y_e$. By continuity of $g$, we finally see that
\[
\frac{\partial (L_f(G(\vec{w}\up x)))}{\partial x_e} = \lim_{\epsilon \to 0} \frac{L_f(G(\vec{w}\up y)) - L_f(G(\vec{w}\up x))}{\epsilon} = g(w\up x_e). \qedhere
\]
\end{proof}

\subsection{Proofs of Equivalence of Water Level Definitions}\label{sec:wl-equiv}

In this sub-section, we prove the combinatorial decomposition in \Cref{alg:comb-dec} and the SUA market formulation both produce the water levels vector defined in \Cref{def:water-levels}.

\levelsets*

\begin{proof}
    To show the desired min-max property, it suffices to show that for any $e \in E$, there exist sets $S^*, T^*$ such that $e \in S$ and $f_T(\{e\}) \neq 0$ and
    \begin{equation}\label{eq:wl-min-max-prop}
        \min_{\substack{T \subseteq E \\ f_T(\{e\}) \neq 0}} \frac{\vec{x}(S^* \setminus T)}{f_{T}(S^*)} = \frac{\vec{x}(S^* \setminus T^*)}{f_{T^*}(S^*)} = \max_{S \ni e} \frac{\vec{x}(S \setminus T^*)}{f_{T^*}(S)}.
    \end{equation}
    From this, it follows that
    \begin{align*}
        \max_{S \ni e} \min_{\substack{T \subseteq E \\ f_T(\{e\}) \neq 0}}   \frac{\vec{x}(S \setminus T)}{f_T(S)} 
        &\geq \min_{\substack{T \subseteq E \\ f_T(\{e\}) \neq 0}} \frac{\vec{x}(S^* \setminus T)}{f_{T}(S^*)} \\
        &= \frac{\vec{x}(S^* \setminus T^*)}{f_{T^*}(S^*)} \\
        &= \max_{S \ni e} \frac{\vec{x}(S \setminus T^*)}{f_{T^*}(S)} \\
        &\geq \min_{\substack{T \subseteq E \\ f_T(\{e\}) \neq 0}} \max_{S \ni e}   \frac{\vec{x}(S \setminus T)}{f_T(S)}.
    \end{align*}
    Since we clearly have $\max_{S} \min_{T}   \frac{\vec{x}(S \setminus T)}{f_T(S)} \leq \min_{T} \max_{S}   \frac{\vec{x}(S \setminus T)}{f_T(S)}$ (omitting constraints on $S,T$) it follows that all inequalities above must be equalities.

    To show that such sets obeying \cref{eq:wl-min-max-prop} exist, we will show that they are exactly those given by the family $S_0, S_1, \dots, S_L$ resulting from \Cref{alg:comb-dec}. Specifically, we will show that for each $\ell$,
    \begin{equation}\label{eq:wl-alg-min-max}
        \min_{\substack{T \subseteq E \\ f_T(S_\ell) \neq 0}} \frac{\vec{x}(S_\ell \setminus T)}{f_{T}(S_\ell)} = \frac{\vec{x}(S_\ell \setminus S_{\ell - 1})}{f_{S_{\ell - 1}}(S_\ell)} = \max_{S \setminus S_{\ell-1} \neq \varnothing} \frac{\vec{x}(S \setminus S_{\ell - 1})}{f_{S_{\ell - 1}}(S)}.
    \end{equation}

    From this, it follows that for each $e \in S_\ell \setminus S_{\ell - 1}$, the pair of sets $(S_{\ell - 1}, S_{\ell})$ form an optimal pair $(S^*, T^*)$ in \cref{eq:wl-min-max-prop} for $e$. This not only completes the proof of our min-max property, but also implies $w_e = \frac{\vec{x}(S_\ell \setminus S_{\ell - 1})}{f_{S_{\ell - 1}}(S_\ell)} = t_\ell = \alg w_e$.

    We proceed will show by induction that \cref{eq:wl-alg-min-max} holds for each $\ell \geq 1$. Notice that the second equality in \cref{eq:wl-alg-min-max} holds by choice of $S_\ell$ in \Cref{alg:comb-dec}, so we only need to show the first equality.

    For the case $\ell = 1$, suppose some set $T$ has $\frac{\vec{x}(S_1 \setminus T)}{f_T(S_1)} < t_1$. We may assume $T \subseteq S_1$, since otherwise we can take $T \cap S_1$ since $f_T(S_1) \leq f_{T \cap S_1}(S_1)$. Then we have
    \[
    t_1 = \frac{\vec{x}(S_1)}{f(S_1)} = \frac{\vec{x}(S_1 \setminus T) + \vec{x}(T)}{f_T(S_1) + f(T)}.
    \]
    Since we know $\frac{\vec{x}(S_1 \setminus T)}{f_T(S_1)} < t_1$, it must then be true that $\frac{\vec{x}(T)}{f(T)} > t_1$. However, this is impossible as $S_1$ is chosen to have maximum density.

    For the inductive step, let $\ell \geq 1$ and  assume \cref{eq:wl-alg-min-max} holds for all previous steps $\ell' < \ell$. Our reasoning will be similar to the base case, but with some extra steps.  As before, take some $T \subseteq E$ with minimum possible value of $\frac{\vec{x}(S_\ell \setminus T)}{f_T(S_\ell)}$, and suppose $\frac{\vec{x}(S_\ell \setminus T)}{f_T(S_\ell)} < t_\ell$. We may again assume $T \subseteq S_\ell$. In addition, we assume that there are no $e' \not \in T$ with $f_T(\{e'\}) = 0$, since adding such $e'$ to $T$ can only improve the choice of $T$.

    We consider two cases: either $S_{\ell - 1} \subseteq T$, or $S_{\ell-1} \setminus T \neq \varnothing$. In the former case we have
    $$
    t_\ell = \frac{\vec{x}(S_\ell \setminus S_{\ell-1})}{f_{S_{\ell-1}}(S_{\ell})} 
    = \frac{\vec{x}(S_\ell \setminus T) + \vec{x}(T \setminus S_{\ell-1})}{f_{T}(S_{\ell}) + f_{S_{\ell-1}}(T)}.
    $$
    Since we assumed that $\frac{\vec{x}(S_\ell \setminus T)}{f_{T}(S_{\ell})} < t_\ell$, it must be the case that $\frac{\vec{x}(T \setminus S_{\ell-1})}{f_{S_{\ell-1}}(T)} > t_\ell$. However, this contradicts the choice of $t_\ell$ in \Cref{alg:comb-dec} as the maximum density of a set after contracting $S_{\ell - 1}$.

    Now, consider the second case where $S_{\ell - 1} \setminus T$ is nonempty. Pick $e' \in S_{\ell - 1} \setminus T$, and let $\ell' < \ell$ be such that $\alg w_{e'} = t_{\ell'}$. We then have

    $$
    \frac{\vec{x}(S_{\ell} \setminus T)}{f_{T}(S_\ell)} 
    = \frac{\vec{x}(S_\ell \setminus (S_{\ell'} \cup T)) + \vec{x}(S_{\ell'} \setminus T)}{f_{S_{\ell'} \cup T}(S_{\ell}) + f_{T}(S_{\ell'})}.
    $$

    Notice that $\frac{\vec{x}(S_{\ell'} \setminus T)}{f_{T}(S_{\ell'})} \geq \frac{\vec{x}(S_{\ell'} \setminus S_{\ell'-1})}{f_{S_{\ell'-1}}(S_{\ell'})} = t_{\ell'}$ by \cref{eq:wl-alg-min-max} applied to $\ell'$, which holds by our inductive hypothesis. Since $\frac{\vec{x}(S_{\ell'} \setminus T)}{f_{T}(S_{\ell'})} \geq t_{\ell'} > t_\ell > \frac{\vec{x}(S_{\ell} \setminus T)}{f_{T}(S_\ell)}$, this means that we must have $\frac{\vec{x}(S_\ell \setminus (S_{\ell'} \cup T))}{f_{S_{\ell'} \cup T}(S_{\ell})} < \frac{\vec{x}(S_{\ell} \setminus T)}{f_{T}(S_\ell)}$. However, this contradicts the choice of $T$.

\end{proof}

\SUAthm*
\begin{proof}
    We begin with an rephrasing of \Cref{alg:comb-dec}. Instead of contracting elements as we did in \Cref{alg:comb-dec}, we ``freeze" elements in \Cref{alg:alt-comb-dec}. Scaling $\vec x$ until some set is saturated is equivalent to measuring the multiplicative slack of sets; therefore, the densities in \Cref{alg-line:4} of \Cref{alg:comb-dec} are precisely the time steps at which we freeze a new set of elements in \Cref{alg:frozen} of \Cref{alg:alt-comb-dec}.
\begin{algorithm}[h]
   \caption{\texttt{An Alternate Combinatorial Presentation of Water Levels}}\label{alg:alt-comb-dec}
   \Input{A point $\vec x \in \Rp^n$.}
   Initialize $t = 0$, all elements are considered ``unfrozen"\;
   \While{there exists an unfrozen element}{
     Raise $t$ until the vector  
     \[ \vec x^{(t)} = \begin{cases}
         t \cdot x_e & \text{if $e$ is unfrozen} \\
         t_{\text{frozen}}(e) \cdot x_e & \text{if $e$ is frozen}
     \end{cases} \]
     has a tight set including at least one unfrozen element.\; 
     Freeze all the elements in the (unique) largest such tight set $S_t$ of $t \cdot \vec x$. \label{alg:frozen} \;
     We set all newly frozen elements in $S$ to have $t_{\text{frozen}}(e) := t$.\;
     Set $\alg w_e = \frac{1}{t}$ for all $e \in S_t$.
 }
 \Output{A vector $\alg{\vec w} $.}
 \end{algorithm}
    
    We use the KKT conditions to show $\alg u_e$ and $\alg \alpha_S$ are optimal. Denote the Lagrange multipliers for the constraints $\alg u_e \geq 0$ by $\mu_e$. We will set $\mu_e = 0$ if $x_e > 0$ and $\mu_e = w_e$ otherwise. The KKT conditions are as follows:
    \begin{itemize}
        \item \emph{Primal Feasibility:} $\sum_{e \in S} \alg u_e \leq f(S)$ for all $S \subseteq E$.
        \item \emph{Dual Feasibility:} $\alg \alpha_S \geq 0$ for all $S \subseteq E$. 
        \item \emph{Stationarity Conditions:} For all $e \in E$, 
        \[ \frac{x_e}{\alg u_e} = \sum_{S \ni e} \alg \alpha_S - \mu_e. \]
        \item \emph{Complementary Slackness:} $\alg \alpha_S > 0$ implies $\sum_{e \in S} \alg u_e = f(S)$ and $\mu_e \cdot \alg u_e = 0$.
    \end{itemize}
    Primal feasibility follows from the fact that \Cref{alg:alt-comb-dec} maintains feasibility of $t \cdot \vec x$ on unfrozen elements. Dual feasibility also easily follows since $t$ only rises. If $x_e > 0$, then $\mu_e = 0$ and 
    \[  
        \frac{x_e}{\alg u_e} = \alg w_e = \sum_{S \ni e} \alg \alpha_S
    \]
    as desired. Otherwise, if $x_e = 0$, then since $\mu_e = \alg w_e$, the stationary condition still holds. Lastly, we check complementary slackness. We have a positive $\alg \alpha_S$ precisely on the sets $E_1, \hdots, E_L$, and so it suffices to check these sets are tight. Indeed, by definition of \Cref{alg:alt-comb-dec}, these sets are tight. Lastly, we have that if $x_e > 0$, then $\mu_e = 0$ and that finishes our complementary slackness conditions. 
\end{proof}

\section{A Warm Up: Online Matroid Intersection}
\label{sec:fractional}
With our new vector of water levels $\vec w\up x$ defined, we will see how they can be used to naturally extend the water-filling paradigm.
Before proving \Cref{thm:online-SAP} for general online SAP, we will see how our techniques can be applied in a simpler setting: fractional online matroid intersection with part arrival. 

In \textit{Online Matroid Intersection}, the goal is to maximize the size of a common independent set between two matroids, when the elements are initially unknown and arrive in some online fashion. We focus on the case where one of the matroids is a partition matroid. In particular, suppose $\mathcal{M}$ is an arbitrary matroid and $\parm$ is a partition matroid with parts $\pa_1, \hdots, \pa_n$, both defined over (an initially unknown) ground set $E$.  We have access to an independence oracle for $\mathcal{M}$ restricted to the elements which have been revealed so far; in other words, $\mathcal{M}$ is known offline. Parts from the partition matroid arrive online. When a part $\pa_j$ arrives, the elements in $\pa_j$ are revealed, and we immediately and irrevocably choose at most one element from $\pa_j$. The goal is to maximize the cardinality of the set of chosen elements, subject to the set being independent in both matroids. In the fractional version of this problem, instead of choosing one $e \in \pa_j$, we select values $x_e \geq 0$ for $e \in \pa_j$ so that $\vec x(\pa_j) \leq 1$ and $x$ remains feasible in the matroid polytope. Online matroid intersection, and the corresponding fractional problem, are instances of online (fractional) SAP where $v_e = b_e = 1$ for all $e \in E$, and $f(S) := \rank_{\mathcal{M}}(S)$.

Our fractional algorithm for this problem will, upon receiving part $\pa_j$, continuously allocate infinitesimally small $dx_e$ to $x_e$ for some $e \in \pa_j$ which has minimum water level, i.e. $e \in \argmin_{e' \in \pa_j} w\up x_{e'}$. This continues until either 1 unit of water has been output by $\pa_j$ (the constraint $\vec x(\pa_j) = 1$ becomes tight) or no more water can be output because every $e \in \pa_j$ has water level $w\up x_e = 1$ (each such $e$ is part of a tight constraint $\vec x(S) = f(S)$). The algorithm then moves onto the next arriving part and repeats the process. We use a primal-dual analysis to prove that the algorithm is $(1-\nicefrac{1}{e})$-competitive.

\subsection{Analysis}

We proceed with a primal-dual analysis to prove \Cref{thm:online-SAP} for fractional online matroid intersection.  The dual program is:
\begin{align*}
    \min \quad &L_f(\vec \gamma) + \sum_{j = 1}^n \beta_j \\
    \text{s.t} \quad &\gamma_e + \beta_{j(e)} \geq 1, \qquad \text{for all $e \in E$} \\
    &\vec \gamma, \vec \beta \geq 0.
\end{align*}
where $L_f$ is the Lov\'asz extension of $f$, and $j(e)$ is defined such that $e \in \pa_{j(e)}$. We will construct a set of dual variables based on the primal allocation. Specifically, upon each arrival of $\pa_j$, we start by setting $\beta_j = 0$. Then upon each infinitesimal increase of $x_e$ by $dx_e$ for $e \in \pa_j$, we increase $\vec \gamma$ to maintain
$$
\gamma_{e'} := G(w_{e'}) \quad \text{for all $e' \in E$}
$$
and increase $\beta_j$ by
$$
d\beta_j := (1 - g(w_{e}))dx_e.
$$
where $g(x) := e^{x-1}$, and $G(x) := \int_0^x g(t)\,dt = e^{x-1}-e^{-1}$. Observe that since the algorithm only increases the primal allocation $x$, by \Cref{cor:wl-mono} the dual variables also only increase as the algorithm progresses. 

To show a $1 - \nicefrac{1}{e}$ competitive ratio, we need to show $(1 - \nicefrac{1}{e})$-approximate feasibility of the dual, and that the dual increase is at most the primal increase. 

\subsubsection{Approximate Feasibility}
We will show that, immediately after the allocation to $\pa_j$ completes, we have $\gamma_e + \beta_{j} \geq 1-1/e$ for all $e \in \pa_j$. Since dual values only increase as the algorithm progresses, this would imply inequality also holds for the final dual values.

Let $w^* = \min_{e \in \pa_j} w\up x_{e}$ be the minimum water level of an element of $\pa_j$ immediately following the allocation to $\pa_j$. We claim that $\beta_j \geq 1 - g(w^*)$. To see this, notice that $d\beta_j \geq (1 - g(w^*))dx_e$ at each point in time during the allocation, so we clearly have $\beta_j \geq (1 - g(w^*))\sum_{e \in \pa_j} x_e$. If $\sum_{e \in \pa_j} x_e = 1$, we have our claim. Otherwise, every $e \in \pa_j$ must be involved in a tight offline constraint, so $w^* = 1$. In this case, $1 - g(w^*) = 0 \leq \beta_j$.

Using this claim, we easily obtain for each element $e \in \pa_j$,
\[
    \gamma_e + \beta_j \geq G(w_e) + 1 - g(w^*) \geq G(w_e) + 1 - g(w_e) = 1 - 1/e.
\]

\subsubsection{Primal equals Dual}
To show that the primal objective equals the dual objective, we will show that the rate of change in primal and dual objectives are equal at each instant in the continuous allocation. In the allocation to $\pa_j$, when $x_e$ receives infinitesimal allocation $dx_e$, the change in the primal objective is exactly $dx_e$.

Meanwhile, the change in dual objective is $d(L_f(G(w\up x))) + d\beta_j$. By \Cref{lem:L-G-derivative}, we know that $d(L_f(G(w\up x))) = g(w_e)dx_e$, and by definition of $\beta_j$ we have $d\beta_j = (1 - g(w_e))dx_e$. Hence, we have that the change in dual is also $g(w_e)dx_e + (1 - g(w_e))dx_e = dx_e$.

\subsubsection{Proof of the Main Theorem}\label{subsec:weak-duality}

The above discussion immediately gives the proof of the main theorem:

\begin{proof}[Proof of \Cref{thm:online-SAP} for online matroid intersection]
Let $x$ be the primal allocation given by water-filling, and $\vec \gamma, \vec \beta$ the associated dual assignment defined above. Since we showed that at each step of the algorithm, $\Delta\text{Primal} \geq \Delta\text{Dual}$, we have that $\sum_e x_e \geq L_f(\vec \gamma) + \sum_j \beta_j$. By approximate feasibility, we have that $\vec \gamma' := \frac{e}{e-1} \cdot \vec \gamma$ and $\vec \beta' := \frac{e}{e-1} \cdot \vec \beta$ together form a feasible dual solution. Finally, positive homogeneity\footnote{$L_f(\lambda x) = \lambda L_f(x)$ for $\lambda \geq 0$. This follows from the definition of Lov\'asz extension.} of the Lov\'asz extension and duality together give
\begin{align*}
    \sum_e x_e &\geq L_f(\vec \gamma) + \sum_j \beta_j \\
    &= \left(1 - \frac{1}{e}\right) \cdot \bigg(L_f(\vec \gamma') + \sum_j \beta_j'\bigg) \\
    &\geq \left(1 - \frac{1}{e}\right) \cdot \OPT_{\dual} = \left(1 - \frac{1}{e}\right) \cdot \OPT. \qedhere
\end{align*}
\end{proof}

\section{General Online Submodular Assignment}
\label{sec:gap}



In order to motivate our algorithm for online SAP, it will be useful to adopt an economics perspective similar to that presented in \cite{birnbaum2008line}. We think of each online arrival $j$ as a bidder and each offline element $e \in E$ as a good. Each bidder $j$ desires at most one unit of goods from the set $\pa_j$, and receives utility $v_e$ for each unit of good $e$ received. We, as the seller, allocate a quantity $x_e$ of good $e$ to bidder $j$ for each $e \in \pa_j$, but we are also limited the supply constraints $\vec{bx}(S) \leq f(S)$ for each $S \subseteq E$. Our goal is to maximize the welfare $\sum_{e \in E} v_e x_e$.

In this setting, it will be important to discriminate between items based on their ``bang-per-buck'' $\frac{v_e}{b_e}$. This is due to the fact that, with free disposal, it can be beneficial to discard an allocation with small bang-per-buck in order to make space for a higher value item. To this end, for $t \geq 0$ define the vector $\vec{w}^t$ to be
\[
\vec{w}^t := \vec w\upn{(b_e x_e)_{e \colon v_e/b_e \geq t}},
\]
i.e. the water levels of items when only considering allocations on item $e$ with bang-per-buck at least $t$.

Using this definition, we give a pricing-based algorithm for allocating $x_e$ values. For each $e \in E$, we place an instantaneous per-unit price on $e$ of
\[ p_e := b_e \cdot \int_0^{v_e/b_e} g(w^t_{e})dt. \]

Then, upon the arrival of $\pa_j$, bidder $j$ will continuously ``purchase'' an infinitesimal amount $dx_e$ of good $e$, for some $e \in \pa_j$ which has highest marginal utility, i.e. $e \in \argmax_{e' \in \pa_j} (v_{e'} - p_{e'})$. If $e$ is not part of any tight set, then we may increase it freely. Otherwise, in order to accommodate the new increase to $x_e$, we will decrease the allocation in $\ealt$ by $dx_\ealt= -\frac{b_e}{b_{\ealt}} \cdot dx_e$, where
\[ 
    \ealt \in \argmin_{e'} \Bigg\{ \frac{v_{e'}}{b_{e'}} \, \colon \, e' \in \bigcap_{\substack{S \ni e \\ \vec{bx}(S) = f(S)}} S \Bigg\}.
\]

In other words, $\ealt$ is the lowest bang-per-buck element that, upon decreasing $x_\ealt$, creates space for an increase in $x_e$. Note that the choice of $\ealt$ is always non-empty as it contains $e$. This continues until either 1 unit of good is allocated to $\pa_j$ (the constraint $x(\pa_j) = 1$ becomes tight) or no good produces positive utility ($v_e = p_e$ for each $e \in \pa_j$). The algorithm then moves onto the next arriving part and repeats the process. 

\subsection{Analysis}\label{sec:frac-osap}

We proceed with a primal-dual analysis to prove \Cref{thm:online-SAP}. Recall the dual program from \Cref{sec:prelims}:
\begin{align*}
    \min \quad &L_f(\vec{\gamma}) + \sum_{j = 1}^n \beta_j \\
    \text{s.t} \quad &b_e \cdot \gamma_e + \beta_{j(e)} \geq v_e, \qquad \text{for all $e \in E$} \\
    &\vec \gamma, \vec \beta \geq 0.
\end{align*}
Upon each arrival of $\pa_j$, we start by setting $\beta_j = 0$. Then upon each infinitesimal increase of $x_e$ by $dx_e$ for $e \in \pa_j$, we maintain
$$
\gamma_{e'} := \int_0^\infty G(w^t_{e'})\,dt \quad \text{for all $e' \in E$}
$$
and increase $\beta_j$ by
$$
d\beta_j := (v_e - p_e)dx_e = b_e \left(\int_0^{v_e/b_e}(1 - g(w^t_{e}))dt \right)dx_e.
$$
where $g(x) := e^{x-1}$, and $G(x) := \int_0^x g(t)\,dt = e^{x-1}-e^{-1}$.  


\subsubsection{Approximate Feasibility}\label{sec:frac-osap-af}
The goal is to show that immediately after the allocation to $\pa_j$ completes, we have $b_e \gamma_e + \beta_{j} \geq (1-\nicefrac{1}{e})v_e$ for all $e \in \pa_j$. Again, by monotonicity of water levels (\Cref{cor:wl-mono}) the dual variables also only increase as the algorithm progresses, so this implies the approximate feasibility for the final dual values.

We denote $u^* = \max_{e \in \pa_j} (v_e - p_e)$ to be the maximum marginal utility of an element of $\pa_j$ immediately following the allocation to $\pa_j$. We claim that $\beta_j \geq u^*$. At every point during the allocation, $d\beta_j \geq u^* dx_e$, so clearly $\beta_j \geq u^*\sum_{e \in \pa_j} x_e$. If $\sum_{e \in \pa_j} x_e = 1$ and we have our claim. Otherwise, we must have reached zero marginal utility during our allocation, so $u^* = 0$, in which case $\beta_j \geq u^*$ trivially.

Using this bound on $\beta_j$, we conclude
\[
    b_e \gamma_e + \beta_j \geq b_e \int_0^\infty G(w^t_e)\,dt + u^* \geq b_e \int_0^{v_e/b_e} \left( G(w^t_e) + 1 - g(w^t_e)  \right)dt= v_e(1 - \nicefrac{1}{e}).
\]
for each element $e \in \pa_j$.

\subsubsection{Primal exceeds Dual}\label{sec:frac-osap-pd}
We will show the primal exceeds the dual by instead dealing with a ``dual surrogate" which upper bounds the dual value.

\[L_f(\vec{\gamma}) = L_f\left( \int_0^\infty G(\vec{w}^t) \, dt \right) \leq \int_0^\infty L_f(G(\vec{w}^t)) \, dt \]
The left hand term is precisely the $\vec{\gamma}$ term in the dual objective. We call the right hand term the $\vec{\gamma}$ term of the ``dual surrogate". The inequality follows from Jensen's inequality and the convexity of $L_f$. 

We will show that the rate of change in primal and dual surrogate objectives is equal at each instant in the continuous allocation. In the allocation to $\pa_j$, when $x_e$ receives infinitesimal allocation $dx_e$, note that $x_{\ealt}$ is infinitesimally reduced by $\frac{b_e}{b_{\ealt}} \cdot dx_e$. Hence, the change in the primal objective is $b_e (\nicefrac{v_e}{b_e} - \nicefrac{v_{\ealt}}{b_{\ealt}})dx_e$.

Meanwhile, the change in surrogate dual objective is $\int_0^{\infty} \left(dL_f(G(\vec w^t)\right) dt + d\beta_j$. Using the chain rule lemma for water levels (\Cref{lem:L-G-derivative}) with the fact that $\vec w^t = \vec w\up{(b_e x_e)_{e : v_e/b_e \geq t}}$, we know that 
\[d(L_f(G(\vec w^t)) =
\begin{cases}
    g(w^t_e) b_e \cdot dx_e & v_{\ealt}/b_{\ealt} \leq t \leq v_e/b_e\\
    0 & \text{otherwise}.
\end{cases}
\]
To see why $d(L_f(G(\vec w^t))$ is only non-zero for $t \in [v_{\ealt}/b_{\ealt}, v_e/b_e]$, observe that if $t > v_e/b_e$, then clearly $x_e$'s value does not effect $\vec{w}^t$. On the other hand, if $t < v_{\ealt}/b_{\ealt}$, we claim the water level of $x_e$ remains 1 while deallocating $\ealt$ and allocating $e$. Recall $\ealt$ was chosen as the minimum bang-per-buck element in $S_{\text{tight}} = \bigcap_{\text{tight $S \ni e$}} S$. Note that $S_{\text{tight}}$ is itself a tight set, one that remains tight if we restrict our allocation to elements of bang-per-buck at least $t$. $S_{\text{tight}}$ also remains tight while shifting mass from $\ealt$ to $e$. Thus, the water level vector $\vec{w}^t$ remains identical.

By definition of $\beta_j$ we have $d\beta_j = b_e \left(\int_0^{v_e/b_e}(1 - g(w^t_{e}))dt \right)dx_e$. Hence, in total, we have that the change in surrogate dual is 
\[
\int_0^{\infty} \left(dL_f(G(\vec w^t)\right) dt + d\beta_j
= b_e \left(\int_{v_{\ealt}/b_{\ealt}}^{v_e/b_e} g(w^t_e) dt\right) dx_e + b_e \left(\int_0^{v_e/b_e}(1 - g(w^t_{e}))dt \right)dx_e.
\]
Notice that $w^t_e = 1$ when $t \leq \nicefrac{v_{\ealt}}{b_{\ealt}}$, so we have $(1 - g(w^t_{e})) = 0$ for such $t$. Using this, we can simplify the change in surrogate dual as 
\[
b_e \left(\int_{v_{\ealt}/b_{\ealt}}^{v_e/b_e} g(w^t_e) dt\right) dx_e + b_e \left(\int_{v_{\ealt}/b_{\ealt}}^{v_e/b_e}(1 - g(w^t_{e}))dt \right)dx_e = b_e \left(\frac{v_e}{b_e} - \frac{v_\ealt}{b_\ealt}\right) dx_e.
\]

The above discussion applied to an argument identical to that of \Cref{subsec:weak-duality} immediately gives the proof of the main theorem.

\subsection{Integral Algorithm under Small Bids Assumption}\label{sec:small-bids}

So far, we analyzed a fractional algorithm for online SAP and showed that it gets a $1-\nicefrac{1}{e}$ competitive ratio. Now, we will show that a similar algorithm and analysis applies to integral SAP under a small bids assumption. To illustrate the key ideas, in this section we will only examine the AdWords version of SAP (in other words, when $b_e = v_e$ for all $e \in E$). The AdWords version of SAP does not require free disposal, which makes the analysis simpler.

Recall the small bids assumption for SAP:
\smallbids*

The small bids assumption allows us to prove the following lemma, which essentially states that the water level is an $\epsilon$-Lipschitz function of the allocation. Intuitively, this is useful because it means that selecting any single element can increase the water levels by at most $\epsilon$, which allows for an analysis that approximates the fractional case. Indeed, when we analyze the integral algorithm under the small bids assumption, we will only use \Cref{lem:wl-lipschitz}, and not use \Cref{ass:small_bids} directly.
\begin{lemma}[Water Levels are Lipschitz]
    \label{lem:wl-lipschitz}
    Suppose \Cref{ass:small_bids} holds. Let $x \in \Rp^E$ and suppose $y = x + t \one_{e}$ for some $t \geq 0$ and $e \in E$. Then we have
    $$\norm{\vec w\up{bx} - \vec w\up{by}}_{\infty} \leq \epsilon t.$$
\end{lemma}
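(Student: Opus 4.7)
The plan is to exploit the max-min characterization of water levels from \Cref{def:water-levels}, together with the saddle-point existence guaranteed by \Cref{lem:wl-alg-equiv}. By the monotonicity of water levels (\Cref{cor:wl-mono}) we have $\vec w\up{by} \geq \vec w\up{bx}$ coordinate-wise, so it suffices to prove $w_{e'}\up{by} - w_{e'}\up{bx} \leq \epsilon t$ for an arbitrary fixed $e' \in E$.

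The key step will be a saddle-point comparison on a common pair $(S^*, \hat T)$. Let $(S^*, T^*)$ be a saddle point for $w_{e'}\up{by}$ as furnished by \Cref{lem:wl-alg-equiv}, so that $e' \in S^*$, $f_{T^*}(\{e'\}) \neq 0$, and $w_{e'}\up{by} = \min_{T} \frac{(\vec{by})(S^*\setminus T)}{f_T(S^*)}$ over admissible $T$. Let $\hat T$ be any minimizer of the analogous quantity $\min_T \frac{(\vec{bx})(S^*\setminus T)}{f_T(S^*)}$ (over the same admissibility constraint). Plugging $S^*$ into the outer max for $w_{e'}\up{bx}$ yields the lower bound $w_{e'}\up{bx} \geq \frac{(\vec{bx})(S^*\setminus \hat T)}{f_{\hat T}(S^*)}$, while plugging the admissible set $\hat T$ into the min for $w_{e'}\up{by}$ yields the upper bound $w_{e'}\up{by} \leq \frac{(\vec{by})(S^*\setminus \hat T)}{f_{\hat T}(S^*)}$. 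Since $\vec{by} - \vec{bx} = tb_e \one_e$, subtracting gives
\[
w_{e'}\up{by} - w_{e'}\up{bx} \;\leq\; \frac{tb_e \cdot \one[e \in S^* \setminus \hat T]}{f_{\hat T}(S^*)}.
\]
This vanishes unless $e \in S^*$ and $e \notin \hat T$. In the generic case $f_{\hat T}(\{e\}) > 0$, the small bids assumption (\Cref{ass:small_bids}) gives $b_e \leq \epsilon f_{\hat T}(\{e\}) \leq \epsilon f_{\hat T}(S^*)$, where the second inequality is monotonicity of the contracted function $f_{\hat T}$ and the containment $\{e\} \subseteq S^*$; hence the right-hand side is at most $\epsilon t$.

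The subtle case, which I expect to be the main obstacle, is the degenerate one $f_{\hat T}(\{e\}) = 0$, where \Cref{ass:small_bids} says nothing about $b_e$. Here I plan to show that $\hat T$ can be replaced by $\hat T \cup \{e\}$ without changing anything relevant: a short submodularity computation (using $f_{\hat T}(\{e\}) = 0$ and monotonicity) shows $f_{\hat T \cup \{e\}}(\{e'\}) = f_{\hat T}(\{e'\}) \neq 0$, so the enlarged set remains admissible, and $f_{\hat T \cup \{e\}}(S^*) = f_{\hat T}(S^*)$ since $e \in S^*$. Because $(\vec{bx})(S^* \setminus (\hat T \cup \{e\})) = (\vec{bx})(S^*\setminus \hat T) - b_e x_e$, the objective at $\hat T \cup \{e\}$ is strictly smaller than at $\hat T$ unless $b_e x_e = 0$. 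Optimality of $\hat T$ then forces $b_e x_e = 0$, so either $b_e = 0$ (in which case the bound is trivial) or $x_e = 0$ (in which case $\hat T \cup \{e\}$ is itself a minimizer; replacing $\hat T$ by it puts $e \in \hat T$, so the indicator vanishes). In every scenario $w_{e'}\up{by} - w_{e'}\up{bx} \leq \epsilon t$, completing the proof.
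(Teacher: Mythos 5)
Your proof is correct, and while the heart of the argument is the same as the paper's (compare the max--min expressions for $\vec w\up{bx}$ and $\vec w\up{by}$ at a common contraction set, extract the perturbation term $tb_e$, and bound it via \Cref{ass:small_bids}), your execution differs in two substantive ways. First, the paper uses locality (\Cref{lem:wl-change}) to reduce to showing the bound only for the perturbed coordinate $e$ itself; you instead prove the bound for every coordinate $e'$ directly, which sidesteps a reduction that the paper states somewhat informally. Second, the paper chains through four sets ($S_x, T_x, S_y, T_y$), using both the minimality of $T_y$ at $S_y$ and the maximality of $S_x$ at $T_x$ (the latter needing the saddle-point theorem), whereas you compare at a single pair $(S^*, \hat T)$ using only the raw max--min definition. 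The price you pay for handling general $e'$ is the degenerate case $f_{\hat T}(\{e\}) = 0$, where the small-bids assumption is silent; the paper never meets this case because admissibility of $T_x$ with respect to $e$ itself guarantees $f_{T_x}(\{e\}) \neq 0$. Your resolution of the degenerate case is correct: the submodularity computation showing $\hat T \cup \{e\}$ remains admissible with the same contracted value of $S^*$, and the conclusion that optimality of $\hat T$ forces $b_e x_e = 0$, both check out, and the two resulting subcases are dispatched properly. Net effect: your argument is slightly longer but more self-contained, relying only on \Cref{def:water-levels} and \Cref{cor:wl-mono} rather than also on locality and the full saddle-point structure.
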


\begin{proof}
    Note that since $\vec y \geq \vec x$, we have $\vec{w} \up{by} \geq \vec{w} \up{bx}$ by monotonicity (\Cref{cor:wl-mono}). Furthermore, by locality (\Cref{lem:wl-change}), since $y$ is obtained from $x$ by increasing the allocation on a single coordinate $e$, the element whose water level increased the most is that of $e$ itself. Thus it suffices to show that $w_e\up{by} \leq w_e\up{bx} + \epsilon t$. 
    From the definition of water levels (\Cref{def:water-levels}), we have 
    $$
    w\up{bx}_e := \max_{S \ni e} \min_{\substack{T \subseteq E \\ f_T(\{e\}) \neq 0}}   \frac{\vec{bx}(S \setminus T)}{f_T(S)}.
    $$
    Let $S_x$ and $T_x$ be the sets that attain the above $\max \min$ for $w_e\up{bx}$. Analogously, define $S_y$ and $T_y$ to be the sets that attain the $\max\min$ for $w_e\up{by}$. Then
    \begin{align*}
        w_e\up{by} 
        &= \frac{\vec{by}(S_y \setminus T_y)}{f_{T_y}(S_y)} 
        \leq \frac{\vec{by}(S_y \setminus T_x)}{f_{T_x}(S_y)} \\
        &= \frac{\vec{bx}(S_y \setminus T_x)}{f_{T_x}(S_y)}
        + \frac{\vec{b}(\vec{y} - \vec{x})(S_y \setminus T_x)}{f_{T_x}(S_y)} \\
        &\leq \frac{\vec{bx}(S_x \setminus T_x)}{f_{T_x}(S_x)}
        + \frac{\vec{b}(\vec{y} - \vec{x})(S_y \setminus T_x)}{f_{T_x}(S_y)} \\
        &= w_e\up{bx} + \frac{\vec{b}(\vec{y} - \vec{x})(S_y \setminus T_x)}{f_{T_x}(S_y)}.
    \end{align*}
    Since $\vec y - \vec x = t\one_{e}$, we have
    $$\frac{\vec{b}(\vec{y} - \vec{x})(S_y \setminus T_x)}{f_{T_x}(S_y)}
    \leq  \frac{tb_e}{f_{T_x}(S_y)}
    \leq \frac{tb_e}{f_{T_x}(\{e\})} 
    \leq t\epsilon. 
    $$  
    Here, the second inequality is because $e \in S_y$ and $f$ is monotone, and the last inequality is by the small-bids assumption. 
\end{proof}

With \Cref{lem:wl-lipschitz} in hand, we are now ready to analyze integral online SAP under the small-bids assumption.

\thmsmallbids*

\begin{proof}[Proof of \Cref{thm:small-bids} 
.]
    
    Define $f'(S) := (1-\epsilon) f(S)$. Since $f$ is monotone and submodular, so is $f'$. \emph{In this proof, all water levels will be with respect to $f'$, unless explicitly noted otherwise.} 
    
    Consider the following integral algorithm:
    \begin{enumerate}
        \item Initialize $x = 0$. ($x$ is  the integral allocation to be returned.)
        \item Upon the arrival of part $j$, pick
        $$e_j \in \argmax\left\{b_e\left(1 - g(w_e\up{bx}) \right): e \in Q_j, \, w_e\up{bx} < 1 \right\}.$$
        \item Update $x \gets x + \one_{e_j}$. (If there is no $e \in Q_j$ with $w_e\up{bx} < 1$, leave $x$ unchanged.)
    \end{enumerate}
    The idea here is the same as the fractional algorithm: always select the item with the highest utility. The only difference is that we are artificially scaling down the capacity constraints by a factor of $1 - \epsilon$. The reason for this is to ensure that the allocation returned by the integral algorithm is always feasible to the original problem. Indeed, since the integral algorithm only allocates items whose water level (with respect to $f'$) is less than 1, \Cref{lem:wl-lipschitz} and locality  (\Cref{lem:wl-change}) imply that the final water levels (again, under $f'$), are at most $1 + \epsilon$. Since $f' = (1-\epsilon)f$, this implies that the final water levels of the allocation with respect to the original function $f$ are at most $(1+\epsilon)(1-\epsilon) < 1$, which by indication of feasibility (\Cref{cor:wl-feasibility}) means that the integral algorithm is guaranteed to produce a feasible solution. 

    Having seen that the solution returned by the integral algorithm is always feasible, we now proceed to bound its competitive ratio. We recall the primal and dual LPs (with capacity constraints given by $f'$) below:
    \begin{align*}
    \max \quad &\sum_e b_ex_e & \min \quad &L_{f'}(\gamma) + \sum_j \beta_j \\
    \text{s.t.} \quad &\sum_{e \in Q_j} x_e \leq 1 \qquad \forall \; j & \text{s.t.} \quad &b_e\gamma_e + \beta_{j(e)} \geq b_e \qquad \forall e \in E\\
    &\sum_{e \in S} b_ex_e \leq f'(S) \qquad \forall \; S \subseteq E & &\gamma, \beta \geq 0.\\
    &x_e \geq 0 \qquad \forall e \in E
\end{align*}
\textbf{Setting the dual variables.} Consider the arrival of part $j$. Let $x$ denote the allocation \emph{right before} $j$ arrives, and let $x'$ denote the allocation \emph{right after} $j$ arrives. After $j$ makes its allocation, we update the dual variables as follows: 
\begin{itemize}
    \item $\gamma_e = G(w_e\upn{\vec{bz}'})$ for all $e \in E$.  
    \item $\beta_j = \max\{0, \max \{b_e(1 - g(w_e\up{bx})): e \in Q_j \}\}$. 
\end{itemize}
Note that the values of $\gamma_e$ are updated in each iteration, whereas the values of $\beta_j$ are updated once in iteration $j$ and then remain unchanged forever. Moreover, note that the dual variables are non-decreasing throughout the course of the algorithm, because the water levels are non-decreasing and $g$ is an increasing function. 
\paragraph{Approximate dual feasibility.} Consider any $e \in E$, and let $j = j(e)$. Let $x$ denote the \emph{final allocation} of the algorithm. Then,
$$b_e\gamma_e + \beta_j = b_eG(w_e\up{bx}) + \beta_j \stackrel{(a)}{\geq} b_eG(w_e\up{bx}) + b_e(1 - g(w_e\up{bx})) \stackrel{(b)}{=} \left(1 - \frac{1}{e}\right) b_e.$$
Here, (a) uses the fact that the water levels are non-decreasing throughout the course of the algorithm, and $g$ is increasing. (b) is using the definition of $g(t) = e^{t-1}$.  

\paragraph{Change in primal vs. change in dual.} The last thing we need to show is that the value of the algorithm is not too small compared to the value of the dual solution. To do this, we compare the change in the primal to the change in the dual in each iteration. 

Consider the arrival of part $j$. Let $\vec $ denote the algorithm's allocation right before $j$ arrives. There are two cases: Either the algorithm selects an element or it does not. If the algorithm does not select any element, then $\Delta P = 0$. Moreover, the only reason the algorithm did not select an element is if $w_e\up{bx} \geq 1$ for all $e \in Q_j$. This implies $\beta_j = 0$. Since $\vec \gamma$ also does not change in this case, we have $\Delta D = 0$.

It remains to consider the case where the algorithm selects an element $e_j$ in part $j$. Let $\vec {x}' = \vec x + \one_{e_j}$ be the algorithm's allocation after $j$'s arrival. Then, on the one hand we have
$$\Delta P = b_{e_j}.$$
On the other hand, we have
\begin{align*}
    \Delta D 
    = \beta_j +  L_{f'}(\vec \gamma') - L_{f'}(\vec \gamma),
\end{align*}
where $\vec \gamma = G(\vec w\up{bx})$ and $\vec \gamma' = G(\vec{w}\upn{\vec{bx}'})$. 
Since element $e_j$ was selected, we know $$\beta_j = b_{e_j}(1 - g(w_{e_j}\up{bx})).$$ Also, 
\begin{align*}
    &L_{f'}(\vec \gamma') - L_{f'}(\vec \gamma) \\
    &= L_{f'}(G(\vec w\upn{\vec{bx}'})) - L_{f'}(G(\vec w\up{bx}))\\
    &= \iprod{\nabla_z L_{f'}(G(\vec w\up{bz})), x' - x} &\text{(for some $z \in [x, x']$, by Mean Value Theorem)}\\
    &= \iprod{bg(\vec w\up{bz}), x' - x} &\text{(by \Cref{lem:L-G-derivative})} \\
    &= b_{e_j}g(w_{e_j}\up{bz}) \\
    &\leq b_{e_j}\left(g(w_{e_j}\up{bx}) + \frac{\epsilon}{1-\epsilon} \right) &\text{(by \Cref{lem:wl-lipschitz} and $f' = (1-\epsilon)f$)}
\end{align*}
Thus,
$$\Delta D\leq b_{e_j}\left(1 + \frac{\epsilon}{1 - \epsilon}\right) = \frac{1}{1 - \epsilon}\Delta P.$$

\paragraph{Final Competitive Ratio.} To conclude, we have
$$\ALG = P \geq (1- \epsilon) D  \geq (1-\epsilon)^2\left(1 - \frac{1}{e}\right) \OPT.$$
Here, the first inequality is because $\Delta P \geq (1-\epsilon) \Delta D$ in each time step. For the second inequality, the $1 - \frac{1}{e}$ factor comes from approximate dual feasibility, and the $1-\epsilon$ factor comes from the fact that we performed the primal-dual analysis on the problem with capacities scaled down by $1-\epsilon$, which reduces the optimal value by a factor of at most $1 -\epsilon$. 
\end{proof}

\section{Online Submodular Welfare Maximization for Matroidal Utilities}
\label{sec:oswm}

In this section, we give a $(1-\nicefrac{1}{e})$-competitive algorithm for the Online Submodular Welfare Maximization problem where the utility function of each agent is the rank function of a matroid. 

Formally, there are $n$ agents, and $m$ items. The items arrive one at a time online. Each agent has an associated utility function $f_i: 2^{[m]} \to \mathbb{Z}_{\geq 0}$ which is the rank function of a matroid $\mc M_i$ on ground set $[m]$. In each time step, we must irrevocably assign the arriving item to some agent. Suppose items $U_i \subseteq [m]$ have been assigned to agents $i \in [n]$. Then the \emph{welfare} of this allocation is 

$$\sum_{i \in [n]} f_i(U_i).$$

The goal is to assign items to maximize welfare, as compared to the optimal offline allocation. We work in the value oracle model, where we can query the value $f_i(S)$ for any $i \in [n]$ and $S \subseteq [m]$ in constant time.  

Our algorithm extends to the setting where each agent has a non-negative weight $a_i$. In this case, the utility function of each agent is $f_i  := a_i \cdot \rank_{\mc{M}_i}$.


\subsection{The Matroidal Ranking Algorithm}
First, some notation. For an allocation of items to agents, we will denote by $U_i$ the set of items assigned to agent $i$. Given such an allocation, we say that an item $j$ is \emph{available} to agent $i$ if $f_i(U_i + j) > f_i(U_i)$.

The algorithm proceeds as follows. Independently for each agent $i$, select $r_i$ uniformly at random from $[0,1]$. Let the \emph{priority} of agent $i$ be defined as $a_i \cdot (1 - g(r_i))$, where $g(z) := e^{z-1}$. When an item arrives, consider the set of agents to whom this item is available, and assign the item to the highest priority agent among these. See \Cref{alg:OSWM} for a formal description.

\begin{remark}[Perusal perspective]
    We note that the Matroidal Ranking Algorithm yields the same allocation as the following procedure. In order of decreasing priority, each agent ``peruses" the full set of items in their arrival order, and greedily picks any item which increases its utility. While this perusal perspective cannot be implemented online, it yields an identical allocation as the Matroidal Ranking Algorithm, and will be useful for the analysis.
\end{remark}

\begin{algorithm}[h]
    \caption{\texttt{Matroidal Ranking Algorithm}}
    \label{alg:OSWM}
    
    \Input{An instance of Matroid-OSWM with agents $i \in [n]$, items $[m]$, and  utility functions $f_i: 2^{[m]} \to \mathbb{Z}_{\geq 0}$.}
    \Output{An allocation of items $U_i \subseteq [m]$ to each agent $i$ with welfare at least $(1-\frac{1}{e})$ times the welfare of the optimal offline allocation.}
    \BlankLine
    
    Select a value $r_i \in [0,1]$ uniformly and independently for each agent, and set the priority of agent $i$ to be $a_i(1-g(r_i))$.\;
    
    When an item $j$ arrives, assign it to the highest priority agent to whom it is available.\;
    
    \Return the resulting allocation $U_i$.\;
\end{algorithm}

\subsection{Analysis}
Consider the primal and dual problems below. The primal has variables $x_{ij}$ representing to what extent item $j \in [m]$ is allocated to agent $i \in [n]$. Notice that in the primal, rather than directly optimizing for the welfare of the agents, we simply maximize the total (weighted) quantity of items assigned, while the constraints enforce that each agent receives an independent set of items with respect to their matroid. Furthermore, there are constraints for each item enforcing that each is assigned at most once. Thus, an integer binary solution to the primal corresponds to a feasible allocation with objective value equal to the welfare of the allocation.

\begin{align*}
\max  \quad &\sum_{i \in [n]} \left(a_i \cdot \ts\sum_{j \in [m]} x_{ij}\right)  & \qquad \min \quad &\sum_{i \in [n]}\sum_{S\subseteq [m]} f_i(S)\, \alpha_{i,S} + \sum_{j\in[m]} \beta_j \\
\mbox{s.t.} \quad &\vec x(S) \leq f_i(S),  \qquad\ \, \forall i \in [n], \, S \subseteq [m] & \mbox{s.t.} \quad &\ts\sum_{S \ni j} \alpha_{i,S} + \beta_j \geq a_i, \qquad \forall i \in [n], \, j \in [m] \\
&\ts\sum_{i \in [n]}  x_{ij} \leq 1, \qquad \forall j \in [m] & &\vec \alpha, \vec \beta \geq 0  \\
&\vec x \geq 0
\end{align*}


Consider the primal solution $\vec{\overline x}$ induced by the allocation at the end of the Matroidal Ranking algorithm. This $\vec{\overline x}$ depends on the random values $r_i$ which were chosen for each agent. We will construct a dual solution $(\vec{\overline \alpha}, \vec{\overline \beta})$ whose objective value is the same as that of $\vec{\overline x}$, and which is approximately feasible in expectation. In particular, we will have

\[
    \mathbb{E}_{w \sim [0,1]^n}\left[\sum_{S \ni j} \overline\alpha_{i, S} + \overline\beta_j\right] \geq \left(\frac{e-1}{e} \right) \cdot a_i
\]
for each $(i,j) \in [n] \times [m]$. This implies that the scaled up solution $(\frac{e}{e-1})(\vec{\overline \alpha}, \vec{\overline \beta})$ is feasible in expectation, and that $\vec{\overline x}$ is $(\frac{e-1}{e})$-approximately optimal.

\paragraph{Dual Assignment} We now define the dual solution $(\vec{\overline \alpha}, \vec{\overline \beta})$. For each agent $i \in [n]$, if $U_i$ is the set of items assigned to agent $i$ at the end of the algorithm, let $S_i := \spn_{\mc M_i}(U_i)$ be the \emph{span} of $U_i$ with respect to $\mc M_i$ (i.e. the largest set of items containing $U_i$ whose rank is equal to $\rank_{\mc M_i}(U_i)$). We assign $\overline\alpha_{i,S_i} := a_i \cdot g(r_i)$. For each item $j$, if $j \in U_i$ for some $i \in [n]$, we set $\overline\beta_j := a_i \cdot (1-g(r_i))$. The remaining variables are set to zero.

\paragraph{Primal = Dual} The dual solution described above has objective value equal to the primal solution returned by the algorithm. To see this, note that whenever an item is assigned to an agent $i$, the objective value of the dual increases by exactly $a_i$. Furthermore, since we only assign an item to an agent if it is available to them, the primal objective value increases by $a_i$ as well. 

\paragraph{Expected Approximate Feasibility} We now show that the dual solution is approximately feasible in expectation. Fix a particular agent-item pair $(\spec i, \spec j)$. We will focus on the dual constraint $\sum_{S \ni \spec j} \alpha_{\spec i, S} + \beta_{\spec j} \geq a_{\spec i}$. First, condition on all random choices $r_{i}$ for $i \neq \spec i$. We denote these choices by $r_{-\spec i}$. Note that, once $r_{-\spec i}$ is fixed, the run of the algorithm is determined by the value of $r_{\spec i}$. Hence, for any choice $r_A \in [0, 1]$ we will denote a run of the algorithm in which $r_{\spec i} = r_A$ as $A$. 

For an agent $i \in [n]$ and run $A$ of the algorithm with $r_{\spec i} = r_A \in [0,1]$, let $U_{i}\upn{A, t} \subseteq [m]$ denote the set of items assigned to agent $i$ at time step $t$ of run $A$ of the algorithm. Likewise, let $U_i\upn{A}$ denote the set of items assigned to agent $i$ at the end of run $A$ of the algorithm. We will also write $\spn(U_i\upn{A, t})$ to mean $\spn_{\mc{M}_i}(U_i\upn{A, t})$, the span in agent $i$'s matroid of the set of items assigned to agent $i$ (and similarly for $\spn(U_i\upn{A})$).

We now define the \emph{critical threshold} $r^*$ to be maximum value of $r_{\spec i}$ such that $\spec j$ is in the span of the items assigned to $\spec i$ in the final allocation. Formally, 
\[
    r^* := \sup \left\{r_A \in [0,1] : \spec j \in \spn\left(U_{\spec i}\upn{A}\right)\right\}.
\]
We define $\sup (\varnothing) = 0$ by convention. 

The following key lemma characterizes several invariants that hold throughout the Matroidal Ranking algorithm. Specifically, it describes how the $\spn(U_i\upn{A, t})$ changes as $r_A$ changes. This will allow us to lower bound the expected amount of dual value assigned during the procedure.

\begin{lemma}\label{lem:OSWM_invariants}
    Fix $r_{-\spec i}$, and values $r_A$ and $r_B$ in $[0,1]$ with $r_A < r_B$. Consider the two separate runs of the algorithm: $A$ with $r_{\spec i} = r_A$ and $B$ with $r_{\spec i} = r_B$. Then at each iteration $t$, we have 
    \begin{enumerate}[(1)]
        \item $\spn(U\upn{A, t}_{\spec i}) \supseteq \spn(U\upn{B, t}_{\spec i})$,
        \item $\spn(U\upn{A, t}_{i}) = \spn(U\upn{B, t}_{i})$ for all $i \in [n]$ with $r_{i} \leq r_A$, 
        \item If $r_B = 1$, then $\spn(U\upn{A, t}_{i}) \subseteq \spn(U\upn{B, t}_{i})$ for all $i \neq \spec i$. 
    \end{enumerate}
\end{lemma}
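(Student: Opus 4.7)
The proof plan is induction on the time step $t$, proving all three invariants simultaneously. The base case $t = 0$ is trivial since $U_i\upn{?,0} = \varnothing$ for every agent. For the inductive step, I consider the item $j_t$ arriving at time $t$ and determine which agent (if any) takes it in each of runs $A$ and $B$; the inductive hypothesis, together with the matroid span-absorption property---if $j \in \spn(U)$ then $\spn(U + j) = \spn(U)$---will translate each scenario into the desired containment at time $t$.

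The pivotal structural observation is that the priority ordering $a_i (1 - g(r_i))$ in the two runs differs only in the position of $\spec i$: since $r_A < r_B$ and $g$ is increasing, $\spec i$'s priority is strictly higher in run $A$. Consequently the agents that sit strictly above $\spec i$ in the priority order of run $A$ are processed in the same relative order, and before $\spec i$, in run $B$ as well. By the inductive (2) these agents have identical spans at time $t - 1$, so they see the same availability of $j_t$ and make identical decisions, which preserves (2). Invariant (1) then follows by splitting on whether $\spec i$ takes $j_t$ in run $B$: if not, (1) is immediate from the inductive hypothesis; if so, the previous observation rules out $j_t$ being snatched by a higher-priority agent in run $A$, and a short sub-case split on whether $j_t$ already lies in $\spn(U\upn{A,t-1}_{\spec i})$ shows that in either event $j_t \in \spn(U\upn{A,t}_{\spec i})$, which together with the inductive (1) closes the required inclusion.

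Invariant (3) is the most delicate and will be the main obstacle. Because $r_B = 1$ places $\spec i$ at the bottom of the priority order in run $B$, every other agent gets first refusal on $j_t$ in run $B$. I would enumerate all cases based on which agent (if any) receives $j_t$ in each run; most cases---same agent in both runs, or $\spec i$ in one run and no one or the same agent in the other---are routine uses of the inductive hypothesis and span absorption. The critical case is when distinct non-$\spec i$ agents $i_A$ in run $A$ and $i_B$ in run $B$ both receive $j_t$. Here I first rule out $i_A$ having lower priority than $i_B$: if it did, $i_B$ would consider $j_t$ before $i_A$ in run $A$ and, since $i_A$ eventually took $j_t$, we would have $j_t \in \spn(U\upn{A,t-1}_{i_B}) \subseteq \spn(U\upn{B,t-1}_{i_B})$ by the inductive (3), contradicting that $i_B$ took $j_t$ in run $B$. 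Hence $i_A$ has strictly higher priority than $i_B$, so $i_A$ considered $j_t$ first in run $B$ and must have passed, giving $j_t \in \spn(U\upn{B,t-1}_{i_A})$; span absorption then yields $\spn(U\upn{A,t}_{i_A}) \subseteq \spn(U\upn{B,t}_{i_A})$. The main bookkeeping challenge is verifying mutual exclusivity and exhaustiveness of these subcases and invoking the correct part of the inductive hypothesis in each branch.
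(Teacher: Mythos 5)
Your proof is correct and takes essentially the same approach as the paper: the paper obtains (1) and (2) from the ``perusal'' reordering of the algorithm (which packages the same arrival-order induction you carry out explicitly), and proves (3) by precisely the induction you describe, with your critical case of two distinct non-$\spec i$ recipients resolved by the same priority-comparison-plus-inductive-hypothesis argument the paper uses. No gaps.
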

\begin{proof}
    Points (1) and (2) follow from the perusal perspective of \Cref{alg:OSWM}. In particular, for point (1), agent $\spec i$ only peruses earlier in run $A$ than in run $B$, so $\spec i$ has more items to choose from in run $A$. For the $t^{\textrm{th}}$ item $j$, if $j \in U_{\spec i}\upn{B, t}$ and it was not already spanned by $U_{\spec i}\upn{A, t-1}$, then it would be chosen by $\spec i$ in step $t$ of run $A$. So $U_{\spec i}\upn{B, t} \subseteq \spn(U_{\spec i}\upn{A, t})$, which implies point (1). 
    
    For point (2), the perusal of all agents $i$ with $r_i < r_A$ is identical in both runs $A$ and $B$ of the algorithm, so in particular, $U\upn{A, t}_{i} = U\upn{B, t}_{i}$, implying point (2). 

    We prove point (3) by induction. Let $r_B = 1$. Suppose for induction (3) holds at iteration $t$, and a new item $j$ arrives. Consider some $i \neq \spec i$. First, if $j \in \spn(U_{i}\upn{B, t})$ already at time $t$, then we have by induction
    \[
        \spn(U_{i}\upn{A, t+1}) \subseteq \spn\left( \spn(U_{i}\upn{A, t}) \cup \{j\}\right) \subseteq \spn(U_{i}\upn{B, t+1})
    \]
    as desired. 

    So suppose otherwise that $j \not \in \spn(U\upn{B, t}_{i})$. This means that in run $B$, when item $j$ arrives, it is available to agent $i$. For the invariant in point (3) to break, item $j$ must be assigned to $i$ in run $A$ but not assigned to $i$ in run $B$. If $j$ is not assigned to $i$ in run $B$, it must be assigned to some other $i'$ with $r_{i'} < r_{i}$ (since when $j$ arrives, it is available to agent $i$). In particular, since $r_{\spec i} = 1$, we know $i' \neq i$ and we may apply induction to $i'$. This tells us that at time $t$ (before $j$'s arrival), $\spn(U\upn{A, t}_{i'}) \subseteq \spn(U\upn{B, t}_{i'})$, and therefore in run $A$, item $j$ was also available to $i'$. Since $r_{i'} < r_{i}$, this contradicts that $j$ is assigned to $i$ in run $A$. 
\end{proof}

This yields the following pair of corollaries.
\begin{corollary}\label{cor:oswm-dominance}
        If $r_{\spec i} < r^*$, then $\spec j \in \spn(U_{\spec i})$
\end{corollary}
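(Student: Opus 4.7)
The plan is to derive the corollary directly from the definition of the critical threshold $r^*$ together with point~(1) of \Cref{lem:OSWM_invariants}. The main idea is that point~(1) says the span of agent $\spec i$'s final bundle is monotonically non-increasing in $r_{\spec i}$ (with the other $r_i$ fixed), so if $\spec j$ lies in the span for some value of $r_{\spec i}$, it lies in the span for every smaller value as well. Since $r^*$ is the supremum of values for which $\spec j$ is spanned, any $r_{\spec i}$ strictly below $r^*$ must also span $\spec j$.

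Concretely, assume $r_{\spec i} < r^*$. By the definition of $r^*$ as a supremum, there exists $r_A \in (r_{\spec i}, r^*]$ such that, in the run $A$ of the algorithm where we replace agent $\spec i$'s random bit by $r_A$ (keeping $r_{-\spec i}$ fixed), we have $\spec j \in \spn(U_{\spec i}\upn{A})$. Let $A'$ denote the original run, so that $U_{\spec i}\upn{A'} = U_{\spec i}$. Since $r_{A'} = r_{\spec i} < r_A$, point~(1) of \Cref{lem:OSWM_invariants} applied at the final iteration gives
\[
    \spn(U_{\spec i}) = \spn(U_{\spec i}\upn{A'}) \supseteq \spn(U_{\spec i}\upn{A}) \ni \spec j,
\]
which is the desired conclusion.

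The only subtle point is whether the supremum in the definition of $r^*$ is attained by \emph{some} $r_A$ strictly greater than $r_{\spec i}$. This is the one step that deserves care: if the set in the definition of $r^*$ were empty, we would have $r^* = 0$ by the stated convention, contradicting $r_{\spec i} < r^*$ since $r_{\spec i} \geq 0$. Hence the set is nonempty and, by the supremum property, contains elements arbitrarily close to $r^*$; in particular, it contains some element in $(r_{\spec i}, r^*]$. Aside from this bookkeeping, the proof is a one-line consequence of \Cref{lem:OSWM_invariants}, so I do not anticipate any technical obstacle.
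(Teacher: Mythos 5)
Your proof is correct and follows the same route as the paper's, which simply cites the definition of $r^*$ and point (1) of \Cref{lem:OSWM_invariants}; you have filled in the supremum bookkeeping explicitly, and that detail is handled correctly.
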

\begin{proof}
    This follows directly from the definition of $r^*$, and point (1) from \Cref{lem:OSWM_invariants}. 
\end{proof}

\begin{corollary}\label{cor:oswm-monotonicity}
    If $r^* < 1$, then item $\spec j$ is always assigned to an agent $i$ with $r_i$ at most $r^*$.
\end{corollary}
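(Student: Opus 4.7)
The plan is to argue by contradiction: suppose that in some run $A$ with $r_{\spec{i}} = r_A$, the item $\spec{j}$ is assigned to an agent $i^*$ with $r_{i^*} > r^*$. I would derive a contradiction by introducing an auxiliary run $B$ in which $r_{\spec{i}}^B = 1$. Since $r^* < 1$, the definition of $r^*$ gives $\spec{j} \notin \spn(U_{\spec{i}}\upn{B})$, so $\spec{j}$ is available to $\spec{i}$ at the time $t$ of its arrival in $B$ but cannot be assigned to $\spec{i}$; hence $\spec{j}$ is assigned in $B$ to some agent $i^{**} \neq \spec{i}$.

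The crux is an intermediate claim that $r_{i^{**}} \leq r^*$, which I would also prove by contradiction. Suppose $r_{i^{**}} > r^*$ and pick any $r_E \in (r^*, r_{i^{**}})$. In the run $E$ with $r_{\spec{i}}^E = r_E$, since $r_E > r^*$ the same reasoning shows $\spec{j}$ is available to $\spec{i}$ but not taken by $\spec{i}$; hence the assignee $i^{***}$ in $E$ has higher priority than $\spec{i}$, i.e.\ $r_{i^{***}} < r_E$. Applying \Cref{lem:OSWM_invariants}(2) to the pair $E$ and $B$ yields $\spn(U_{i^{***}}\upn{E,t}) = \spn(U_{i^{***}}\upn{B,t})$, and the same holds for every agent whose $r$-value is at most $r_E$. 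So $\spec{j}$ remains available to $i^{***}$ in $B$ while being unavailable to any agent with priority strictly higher than $i^{***}$ (these would have taken it in $E$ instead). Since $r_{i^{***}} < r_E < r_{i^{**}}$, $i^{***}$ has strictly higher priority than $i^{**}$ in $B$, contradicting the fact that $\spec{j}$ is assigned to $i^{**}$.

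Once $r_{i^{**}} \leq r^*$ is in hand, the remainder is short. I apply \Cref{lem:OSWM_invariants}(3) to $i^{**}$ (which is not $\spec{i}$) to obtain $\spn(U_{i^{**}}\upn{A,t}) \subseteq \spn(U_{i^{**}}\upn{B,t})$, so $\spec{j} \notin \spn(U_{i^{**}}\upn{A,t})$ and $\spec{j}$ is available to $i^{**}$ in run $A$ as well. But then $r_{i^{**}} \leq r^* < r_{i^*}$ makes $i^{**}$ strictly higher priority than $i^*$ in $A$, so the algorithm picks $i^{**}$ (or some agent of even higher priority, which necessarily has $r \leq r^*$) over $i^*$, contradicting the initial assumption.

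The main obstacle I anticipate is the intermediate claim $r_{i^{**}} \leq r^*$: it requires the right choice of auxiliary run $E$ in the sandwich $r^* < r_E < r_{i^{**}}$ and a careful use of \Cref{lem:OSWM_invariants}(2) to transfer availability of $\spec{j}$ from run $E$ into run $B$. After that step, the conclusion follows almost mechanically by combining with \Cref{lem:OSWM_invariants}(3) applied to $i^{**}$ in the original run $A$.
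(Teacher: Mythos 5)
Your proof is correct and follows essentially the same route as the paper's: both pivot on the run with $r_{\spec i}=1$, use \Cref{lem:OSWM_invariants}(2) to transfer availability of the relevant agent from an intermediate run in $(r^*,1)$ to that run, and then use \Cref{lem:OSWM_invariants}(3) to transfer availability of the witness agent to every other run. The only real difference is in how the witness with $r\le r^*$ is identified --- the paper uses a limiting argument over runs $r^*+\eps$ together with finiteness of the agent set, whereas you take the witness to be the assignee in the run $r_{\spec i}=1$ and bound its $r$-value by contradiction via a single auxiliary run $E$ --- which is a cosmetic reorganization of the same argument.
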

\begin{proof}
    Observe first that for any $\eps > 0$, if $r_{\spec i} = r^* + \eps \leq 1$ then item $\spec j$ is assigned to some agent $i$ with $r_{i} < r^* + \eps$. This is because $r_{\spec i} > r^*$ implies both that item $\spec j$ is \emph{available} to $\spec i$ when it arrives, and that it is not assigned to $\spec i$. Since this holds for every $\eps > 0$ yet there are only finitely many agents, there is some such $i =: i^*$ with $r_{i^*} \leq r^*$, and some $\eps^* > 0$ such that $\spec j$ is assigned to $i^*$ when $r_{\spec i} = r^* + \eps^*$. 

    Now we claim that for any value of $r_{\spec i}$, item $\spec j$ is always available to $i^*$ when $\spec j$ arrives. This implies the claim. First, we compare instance $A$ of the algorithm with $r_{\spec i} = r_A := r^* + \eps^*$ to any instance $B$ with $r_{\spec i} = r_B > r^* + \eps^*$. By \Cref{lem:OSWM_invariants}(2) applied to $i^*$, we have $\spn(U_{i^*}\upn{A, t}) = \spn(U_{i^*}\upn{B, t})$ at the time $t$ when $\spec j$ arrives. So, in particular, $\spec j$ is available to $i^*$ in instance $B$, since it's available in instance $A$. 

    Since the above holds for any $r_B > r^* + \eps$, it in particular holds for $r_B = 1$. Now we apply \Cref{lem:OSWM_invariants}(3) to $i^*$ on any instance $A$ with $r_{\spec i} = r_A < 1$ and instance $B$ with $r_{\spec i} = r_B = 1$. We have $\spn(U_{i^*}\upn{A, t}) \subseteq \spn(U_{i^*}\upn{B, t})$ at time $t$ when $\spec j$ arrives. Therefore, again, since $i^*$ is available to $\spec j$ in instance $B$, it is also available in instance $A$. 

    So in all cases, $\spec j$ is available to $i^*$ (with $r_{i^*} \leq r^*$) when it arrives. So $\spec j$ is always assigned to an agent $i$ with $r_i$ at most $r^*$. 
\end{proof}

This gives us all ingredients required for the final proof that the Matroidal Ranking algorithm achieves a $(1 - \nicefrac{1}{e})$-competitive ratio.

\begin{proof}[Proof of \Cref{thm:matroid-oswm}]
Let $\vec{\overline x}$ be the primal solution given by the Matroidal Ranking algorithm, and $(\overline{\alpha}, \overline{\beta})$ the corresponding dual solution described above. We showed that the primal and dual objectives are equal: $\sum_{i} (a_i \cdot \sum_{j} \overline{x}_{ij}) = \sum_{i, S} f_i(S) \overline{\alpha}_{i, S} + \sum_j \overline{\beta}_j$. To argue that $\overline{x}$ is $(1 - \nicefrac{1}{e})$-competitive in expectation with the offline optimal primal solution, it then suffices by duality to show that, in expectation, $\overline{\alpha}, \overline{\beta}$ are approximately feasible. 

For any fixed agent-item pair $(\spec i, \spec j)$, we condition on the values of $r_{-\spec i}$, and let $r^*$ be the critical threshold. Then \Cref{cor:oswm-dominance} implies that 
\[
    \mathbb{E}_{r_{\spec i} \sim [0,1]}\Bigg[\sum_{S \ni \spec j} \overline\alpha_{\spec i, S}\, \bigg|\, r_{-\spec i} \Bigg] \geq \int_0^{r^*} a_{\spec i} \cdot g(z) \, dz = a_{\spec i} \cdot \left( g(r^*) - \frac{1}{e}\right).
\]
Similarly, \Cref{cor:oswm-monotonicity} implies that 
\[
    \mathbb{E}_{r_{\spec i} \sim [0,1]}\Bigg[\overline\beta_{\spec j}\, \bigg|\, r_{-\spec i} \Bigg] \geq \int_0^1 a_{\spec i}\cdot (1 - g(r^*)) \,dz = a_{\spec i}\cdot (1 - g(r^*)).
\]
(note if $r^* = 1$, then the RHS is 0, so the inequality still holds, despite \Cref{cor:oswm-monotonicity} not applying). The sum is then $a_{\spec i} \cdot (1 - \nicefrac{1}{e})$, and since this does not depend on the conditional values of $r_{-\spec i}$, we may drop the conditioning to get 
\[
    \mathbb{E}_{w \sim [0,1]^n}\Bigg[ \sum_{S \ni \spec j} \overline\alpha_{\spec i, S} + \overline\beta_{\spec j}\Bigg] \geq a_{\spec i} \cdot \left(1 - \frac{1}{e}\right)
\]
as desired. 
\end{proof}

\bibliographystyle{alpha}
\bibliography{references}

\appendix

\section{Laminar AdWords as a Case of Online SAP}
\label{sec:laminar-adwords-osap}

In the AdWords problem, we have bidders known offline. Impressions $j$ arrive online; bidder $i$ bids $b_{ij}$ dollars for this impression. Each bidder $i$ has budget $B_i$. When an impression arrives, we must irrevocably allocate it to a bidder who may afford it. Our goal is to maximize revenue. Written in terms of an linear program, 
\begin{align*}
    \max &\sum_{ij \in E} b_{ij} x_{ij} \\
    \text{s.t} \quad &\sum_{j} b_{ij} x_{ij} \leq B_i \quad \text{for all bidders $i$} \\
    &\sum_{i} x_{ij} \leq 1 \quad \text{for all impressions $j$} \\
    &x \geq 0.
\end{align*}

In the laminar setting, we account for a laminar family $\mathcal{L}$ of budget constraints on the bidders. Formally speaking, $\mathcal{L}$ is 
a laminar family over edges $E$, and each set $S \in \mathcal{L}$ has a budget constraint of $B_S$. The laminar AdWords linear program is 

\begin{align*}
\label{laminar-adwords}
    \max &\sum_{ij} b_{ij} x_{ij} \\
    \text{s.t} \quad &\sum_{ij \in S} b_{ij} x_{ij} \leq B_S \quad \text{for all $S \in \mathcal{L}$} \\
    &\sum_{i} x_{ij} \leq 1 \quad \text{for all impressions $j$} \\
    &x \geq 0.
\end{align*}

We will show a single monotone submodular function $f$ which captures these constraints. 
\begin{theorem}
    Laminar AdWords may be captured as a case of online SAP.
\end{theorem}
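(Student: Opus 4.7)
The plan is to exhibit a monotone submodular function $f: 2^E \to \Rp$ whose polymatroid constraints $\vec{bx}(T) \leq f(T)$ for all $T \subseteq E$ are equivalent to the laminar budget constraints. I would define the online parts by $\pa_j := \{ij \in E : i \text{ bids on impression } j\}$; this reproduces the assignment constraints $\sum_i x_{ij} \leq 1$ as $\vec x(\pa_j) \leq 1$. I take costs and values $b_e = v_e = b_{ij}$ for each edge $e = ij$, in line with the AdWords objective.

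For the submodular function itself, I propose the ``support function'' definition
\[
f(T) := \max\left\{ z(T) \,:\, z \in \Rp^E,\ z(S) \leq B_S \text{ for all } S \in \mathcal{L}\right\}.
\]
Monotonicity and $f(\varnothing) = 0$ are immediate from $z \geq 0$. To verify the polytopes coincide: if $\vec{bx}(T) \leq f(T)$ for every $T$, then in particular $\vec{bx}(S) \leq f(S) \leq B_S$ for every $S \in \mathcal{L}$, so $x$ is laminar-feasible; conversely, if $x$ is laminar-feasible, then $z := \vec{bx}$ is a feasible point of the LP defining $f(T)$, giving $\vec{bx}(T) \leq f(T)$ for every $T$. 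Thus the SAP submodular constraints exactly recover the laminar polytope, and laminar AdWords embeds into Online SAP.

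The main obstacle is proving $f$ is submodular. I plan to leverage the laminar structure of $\mathcal{L}$: after adjoining the singletons $\{e\}$ with budgets large enough to be non-binding, every leaf of the laminar forest is a singleton, and an LP scaling argument (matching upper bounds from restricting to each subtree with lower bounds from combining subtree-optimal solutions and scaling by at most one) yields the recursion
\[
f_S(T) = \min\Big(B_S,\ \sum_{C \in \text{ch}(S)} f_C(T \cap C)\Big),
\]
with $f = f_E$ at the root and trivial base cases $f_{\{e\}}(\{e\}) = B_{\{e\}}$, $f_{\{e\}}(\varnothing) = 0$ at the leaves. Submodularity of $f$ then follows by induction on tree depth via three standard preservation properties: sums of monotone submodular functions are monotone submodular; the pullback $T \mapsto g(T \cap C)$ inherits these properties from $g$; and $\min(c, \cdot)$ applied to a monotone submodular function remains monotone submodular (a short case analysis on whether $g$ exceeds $c$ on each of $A, B, A \cap B, A \cup B$).
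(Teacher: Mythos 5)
Your proof is correct, but it takes a genuinely different route from the paper's. The paper defines $f(S)$ as the minimum total budget of a sub-family of $\mathcal{L}$ covering $S$, and proves submodularity by an exchange argument on covers (extracting from $\mathcal{S}\cup\mathcal{T}$ a sub-family $\mathcal{N}$ of lowest-level sets covering $S\cap T$ whose removal still leaves a cover of $S\cup T$). You instead take the LP ``support function'' $f(T)=\max\{z(T): z\geq 0,\ z(S)\leq B_S\ \forall S\in\mathcal{L}\}$ and prove submodularity by a recursion down the laminar tree combined with the standard truncation lemma that $\min(c,g)$ preserves monotone submodularity. The two functions in fact coincide (your LP is the fractional relaxation of the paper's covering problem, and laminar incidence matrices are totally unimodular), but the arguments are structurally distinct. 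Your definition makes the equivalence of the two feasible regions essentially tautological --- one direction by plugging in $T=S\in\mathcal{L}$, the other because a laminar-feasible $\vec{bx}$ is itself a witness $z$ --- whereas the paper needs a small observation about covers there; conversely, the paper's definition makes monotonicity and the cover-based bound immediate, while your route concentrates the work in the tree recursion $f_S(T)=\min(B_S,\sum_{C}f_C(T\cap C))$, which you justify correctly via the scaling argument (all constraints are nonnegative upper bounds, so scaling down preserves feasibility). Your approach has the side benefit of exhibiting $f$ directly as the canonical polymatroid rank function of the laminar polytope, and each of the three preservation lemmas you invoke (sums, pullbacks along $T\mapsto T\cap C$, and truncation by a constant) is standard and holds as claimed. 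The only point worth tightening is the bookkeeping around elements not contained in any set of $\mathcal{L}$ and the possible absence of a root: you should note that adjoining $E$ and the missing singletons with non-binding budgets neither changes the feasible region nor the function on the relevant sets, which is needed for ``$f=f_E$ at the root'' to be well-posed (the paper's definition has the same implicit assumption that every set is coverable).
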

\begin{proof}
    Let 
\[ f(S) := \min \left\{ \sum_{T \in \mathcal{S}} B_T \colon \text{$\mathcal{S} \subseteq \mathcal{L}$ covers $S$} \right\}. \]
In other words, $f(S)$ is the most restricted budget constraint on $S$ imposed by $\mathcal{L}$. A solution satisfying the submodular assignment problem with $f$ clearly satisfies the laminar AdWords linear program. A solution to laminar AdWords linear program will also satisfy the submodular assignment problem with $f$; this follows from the fact that a minimizing sub-family $\mathcal{S}$ achieving $f(S)$ will be disjoint sets. 

We move on to showing $f$ is monotone and submodular. The former property is follows from definition. So it remains to show $f$ is submodular. Take $S, T \subseteq E$. We will show 
\[ f(S \cup T) \leq f(S) + f(T) - f(S \cap T). \]
Say $S$ and $T$ are realized by covers $\mathcal{S}$ and $\mathcal{T}$ respectively. Then, $\mathcal{S} \cup \mathcal{T}$ (here, we allow the union family to contain duplicate sets) is clearly a cover for $S \cup T$. We will show a sub-family $\mathcal{N} \subseteq \mathcal{S} \cup \mathcal{T}$ which covers $S \cap T$ and moreover, $\mathcal{S} \cup \mathcal{T} \setminus \mathcal{N}$ is still a cover for $S \cup T$. Proving this, we are left with 
\begin{align*}
    f(S \cup T) &\leq \text{budget of }(\mathcal{S} \cup \mathcal{T} \setminus \mathcal{N}) \\
    &= f(S) + f(T) - \text{budget of }(\mathcal{N}) \\
    &\leq f(S) + f(T) - f(S \cap T).
\end{align*}
So, it remains to find a set $\mathcal{N}$ satisfying (1) $\mathcal{N}$ covers $S \cap T$ and (2) $\mathcal{S} \cup \mathcal{T} \setminus \mathcal{N}$ is still a cover for $S \cup T$. Let $\mathcal{N}$ be a collection of lowest level\footnote{The laminar family is partially ordered inclusion wise, where a set $A$ is lower than $B$ if $A \subseteq B$.} sets which covers $S \cap T$. Clearly (1) is satisfied with this definition of $\mathcal{N}$. To see (2), note that elements in $S \cap T$ must be covered twice in the family $\mathcal{S} \cup \mathcal{T}$. Therefore, for any set $A \in \mathcal{N}$, there exists a set $B \in \mathcal{S} \cup \mathcal{T}$ covering $A$. This implies removing $\mathcal{N}$ from $\mathcal{S} \cup \mathcal{T}$ still leaves us with a cover for $S \cup T$. 
\end{proof}

\end{document}